 \def\ps@pprintTitle{%
 \let\@oddhead\@empty
 \let\@evenhead\@empty
 \def\@oddfoot{\centerline{\thepage}}%
 \let\@evenfoot\@oddfoot}
\newtheorem{thm}{Theorem}
\newcommand{\argmax}{\operatornamewithlimits{argmax}}
\journal{Computational Statistics \& Data Analysis}
\begin{document}

\begin{frontmatter}



\title{A unified framework for model-based clustering, linear regression and multiple cluster structure detection}


 \author{Giuliano Galimberti}
 \author{Annamaria Manisi}
 \author{Gabriele Soffritti\corref{cor1}}
  \ead{gabriele.soffritti@unibo.it}

\cortext[cor1]{Correspondence to: Department of Statistical Sciences, University of Bologna \\ via Belle Arti 41, 40126 Bologna, Italy. Tel.: +39 051 2098193, Fax: +39 051 232153}


\address{Department of Statistical Sciences, University of Bologna, Italy}

\begin{abstract}

A general framework for dealing with both linear regression and clustering problems is described. It includes Gaussian clusterwise linear regression analysis with random covariates and cluster analysis via Gaussian mixture models with variable selection. It also admits a novel approach for detecting multiple clusterings from possibly correlated sub-vectors of variables, based on a model defined as the product of conditionally independent Gaussian mixture models. A necessary condition for the identifiability of such a model is provided. The usefulness and effectiveness of the described methodology are illustrated using simulated and real datasets.
\end{abstract}

\begin{keyword}
Bayesian information criterion \sep Clusterwise linear regression \sep  EM algorithm \sep
Gaussian mixture model \sep  Genetic algorithm \sep Variable selection
\end{keyword}

\end{frontmatter}



\section{Introduction}\label{sec:intro}

Cluster analysis and regression are two fields of statistics useful for unsupervised and supervised statistical learning, respectively \citep{hastie2009}. Although their goals are quite different, there may be situations in which relevant information contained in a given dataset can be obtained using both types of analysis.

Consider, for example, a dataset characterised by the presence of an unknown cluster structure (a grouping of the given set of observations into clusters), but suppose that this structure is confined in a sub-space of the variable space.
Since the effect of the presence of uninformative variables is a masking of the cluster structure \citep[see, e.g.,][]{gordon1999}, the use of methods able to select the informative variables in a cluster analysis, such as those proposed by \citet{fowlkes1988}, \citet{gnanadesikan1995}, \citet{brusco2001}, \citet{montanari2001}, \citet{fraiman2008}, \citet{steinley2008a} and \citet{witten2010}, is crucial for a proper recovery of the unknown cluster structure from the observed data. If the variables are continuous, Gaussian mixture models can be employed. In these models it is assumed that the joint probability density function (p.d.f.) of the variables is a mixture of $K$ ($K \geq 1$) Gaussian components (one component for each cluster) \citep[see, e.g.,][]{mclachlan2000,melnykov2010}. The number of components is generally chosen through model selection criteria, such as the Bayesian information criterion ($BIC$) \citep{schwarz1978}. Methods that simultaneously select the informative variables and find the cluster structure, based on Gaussian mixture models, have been proposed by \citet{dy2004}, \citet{law2004}, \citet{tadesse2005}, \citet{raftery2006},
\citet{pan2007}, \citet{xie2008}, \citet{wang2008},
\citet{galimberti2009}, \citet{maugis2009a}, \citet{maugis2009b}, \citet{zeng2009},
\citet{zhou2009}, \citet{guo2010} and \citet{andrews2014}.
Comparisons among some of these methods, based on analyses of simulated and real datasets, can be found in \citet{steinley2008b}, \citet{witten2010}, \citet{celeux2011}, \citet{celeux2014} and \citet{andrews2014}. In particular, the variable selection methods proposed by \citet{raftery2006} are based on a model in which the vector of the examined variables is assumed to be partitioned into two sub-vectors: one is composed of the informative variables and the other contains the uninformative ones. Furthermore, it is assumed that the joint p.d.f.~is the product of a Gaussian mixture model (with $K \geq 2$ components) for the distribution of the informative variables and a Gaussian linear regression model for the conditional distribution of the uninformative variables given the informative ones. The selection of the informative variables is then recast as a model comparison problem, based on the $BIC$. A local optimum in model space is found through a greedy search algorithm. In the model selected using these methods, the uninformative variables are assumed to linearly depend on all the informative ones. In some situations this assumption may be restrictive and could lead to erroneous model selections and, consequently, to erroneous results. To avoid this drawback, two more versatile models have been proposed: one allows the uninformative variables to be explained by only a subset of the informative ones \citep{maugis2009a}; the other also allows
that some uninformative variables are independent of all the informative ones \citep{maugis2009b}. The three methods just described make use of Gaussian linear regression models for performing variable selection in model-based cluster analysis. Thus, they embed a supervised learning process into an unsupervised one.

Consider another example in which the dataset provides information about responses and predictors for a regression problem. Furthermore, suppose that the sample observations come from a population composed of several sub-populations, each of which is characterised by a different regression equation, but the information about which sub-population each observation comes from is unknown. An approach able to deal with such an unobserved heterogeneity in a regression problem is represented by clusterwise regression. In this approach the dependence of the responses on the predictors is described by means of a finite mixture \citep[see, e.g.,][]{quandt1978,desarbo1988,deveaux1989}. With continuous responses Gaussian mixture models are generally employed, resulting in mixtures of Gaussian linear regression models. In an analysis based on these models, an unsupervised learning process is embedded into a supervised one.

In addition to the variable selection, another problem to be tackled when performing a cluster analysis is that datasets may also be characterised by the presence of several unknown cluster structures, that is different groupings of the same set of observations defined in different subspaces of the variable space. Since a classical assumption
in most clustering methods is that one single cluster structure is contained in the data, some relevant information
about the ways observations are clustered could be missed. Methods to deal with this problem are due, for example, to \citet{soffritti2003}, \citet{friedman2004}, \citet{belitskaya2006}, \citet{galimberti2007}, \citet{poon2013}, \citet{dangbailey2015} and \citet{liu2015}. In the latter four papers, model-based methods are described. In particular, the solution proposed by \citet{galimberti2007} relies on a model in which the vector of the observed variables is assumed to be partitioned into independent sub-vectors, and a Gaussian mixture model is specified for each variable sub-vector.

In this paper a general framework is developed by exploiting the idea of combining the above mentioned unsupervised and supervised methods. More specifically, it incorporates Gaussian model-based clustering with variable selection, Gaussian clusterwise linear regression (with a constraint on the regression coefficients) and a novel approach for detecting multiple cluster structures from possibly correlated variable sub-vectors. These three fields of statistics are brought together through the specification of a general and flexible model. The basic idea of the proposed framework is introduced in Section~\ref{sec:intro_example}, while the theory is illustrated in Section~\ref{sec:framework}. Namely, the general model is presented in Section~\ref{sec:model}. A theorem providing necessary conditions for its identifiability is proved in Section~\ref{sec:identifiability}. Details about the maximum likelihood (ML) estimation of the model parameters and the issue of model selection are given in Sections~\ref{sec:estimation} and \ref{sec:model_selection}, respectively.
Section~\ref{sec:results} contains experimental results obtained by exploring the model space in some simulated and real datasets through genetic algorithms. Concluding notes are reported in Section~\ref{sec:conclusions}.

\section{An introductory example}\label{sec:intro_example}

The basic idea of the framework developed in this paper is illustrated through an example referring to a dataset containing information about 17 male and 16 female students in an introductory statistics class at a US college. The dataset is available at \verb"http://economics-files.pomona.edu/GarySmith/" \verb"StatSite/eclectic.html". The example focuses on three variables: student's height, mother's height and father's height. Figure~\ref{fig:scatterplot_height} shows the scatterplot matrix, where male and female students are represented using squares and circles, respectively.

\begin{figure}
 \centering
   \includegraphics[width=0.8\textwidth]{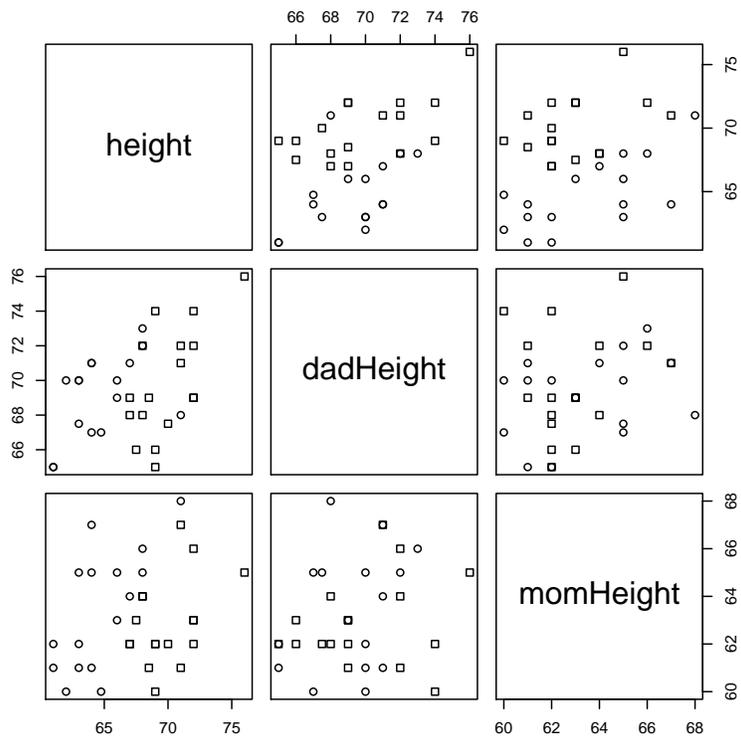}
  \caption{Scatterplot matrix of the heights (in inches) in the students' dataset.}\label{fig:scatterplot_height}
\end{figure}

From an unsupervised point of view, one may be interested in detecting whether the dataset is characterised by the presence of a cluster structure in the joint distribution of the three examined variables. Using the package \verb"mclust" \citep{fraley2002, fraley2012} for the \verb"R" software environment \citep{R2015}, Gaussian mixture models with $K=1,2,3,4$ components are fitted to the dataset, and the best model is selected according to the $BIC$. Since such a model results to be a Gaussian model with only one component, no evidence of any cluster structure emerges from this unsupervised analysis. This model is denoted as $M_1$ in Section~\ref{sec:ripresa esempio intro}.

As described in Section~\ref{sec:intro}, the presence of a cluster structure may be masked by uninformative variables. Thus, a further analysis is carried out through the \verb"R" package \verb"clustvarsel" that implements the variable selection methods proposed by \citet{raftery2006}. This analysis suggests that parents' heights are informative variables while the students' height can be discarded. Furthermore, the best Gaussian mixture model for parents' heights identifies $K=2$ clusters. Figure~\ref{fig:ellipse_heights} shows the two 85\% ellipses obtained from the mean vectors and covariance matrices of this model. White and black colours are used to highlight observations assigned to different clusters. According to this analysis students with short parents can be separated from students with tall parents.

\begin{figure}
 \centering
   \includegraphics[width=0.7\textwidth]{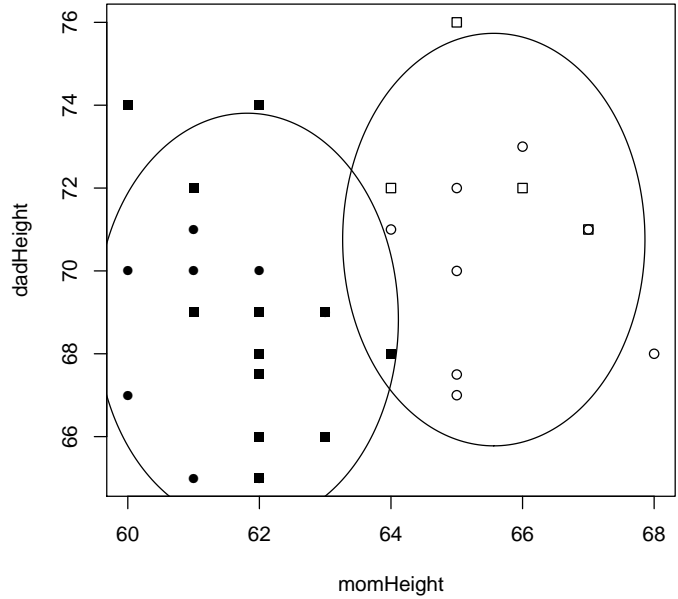}
  \caption{Scatterplot of the mother's and father's heights in the students' dataset with the segmentation of the observations obtained from the best fitted Gaussian mixture model for the parents' heights distribution. The bivariate normal density contours described by the two ellipses have a 85\% probability content.}\label{fig:ellipse_heights}
\end{figure}

In order to measure the association between the detected cluster structure and the gender of the students, the joint classification of the students based on their gender and the segmentation obtained from the parents' heights is examined, and the adjusted Rand index (aRi) \citep{hubert1985} is computed. From the obtained results (see Table~\ref{tab:heights}, left part) it emerges that such an association is very low. The joint model for the three examined variables resulting from this analysis is denoted as $M_2$ in Section~\ref{sec:ripresa esempio intro}.

\begin{table}
\caption{Comparison between the classification of the students based on their gender and the segmentations obtained from models $M_2$ (Structure 1) and $M_5$ (Structure 2).}  \label{tab:heights}
\centering
\begin{tabular}{lcccccc}
  \hline
  & \multicolumn{3}{l}{Structure 1} & \multicolumn{3}{l}{Structure 2}\\   \hline
 &  \multicolumn{2}{l}{Cluster} & & \multicolumn{2}{l}{Cluster} &\\
Gender  & 1  & 2  & &  1   & 2 & \\ \hline
      F & 8 & 8 & &  1 & 15 & \\
      M & 13 & 4 & &  16 & 1 &\\
  \hline
      aRi & \multicolumn{2}{c}{0.047} & & \multicolumn{2}{c}{0.765} &\\
  \hline
\end{tabular}
\end{table}

Given the drawbacks of the variable selection methods proposed by \citet{raftery2006}, further analyses are carried out by using two \texttt{C++}
softwares (\texttt{SelvarClust} and \texttt{SelvarClustIndep}) that incorporate algorithms for fitting and selecting models described in \citet{maugis2009a, maugis2009b}. Parents' heights result to be informative variables also using these techniques (with the same Gaussian mixture model with two components selected by \verb"clustvarsel"). As far as the Gaussian linear regression model for the conditional distribution of the students' heights is concerned, only the father's height is selected as a predictor. The resulting model for the joint distribution of the three heights is $M_3$ in Section~\ref{sec:ripresa esempio intro}.

The examined dataset can be analysed also from a supervised point of view. Namely, the interest could be in predicting the height of a student from his/her parents' heights. In particular, this supervised analysis is performed using two different models for the conditional distribution of the students' heights: a Gaussian linear regression model and a mixture of two Gaussian linear regression models with the same regression coefficients. Both models are intentionally specified and estimated by ignoring the information about gender. The fictitiously unobserved heterogeneity obtained in this way could be captured by using a mixture of two Gaussian linear regression models. For both types of models the fathers' height results to be the only relevant predictor for the students' height. Figure~\ref{fig:regression_heights} shows the regression lines estimated using the two models obtained after performing regression variable selection. Models $M_4$ and $M_5$ in Section~\ref{sec:ripresa esempio intro} denote the models for the joint distribution of the three variables obtained by multiplying each of these regression models by the best Gaussian model for the marginal distribution of the parents' heights. In particular, using a mixture of two Gaussian regressions allows to produce a segmentation of the observations into two clusters. In the right panel of Figure~\ref{fig:regression_heights} white and black colours are used to distinguish between such clusters. Differently from the cluster structure obtained from the unsupervised analysis of the parents' heights, the cluster structure detected in this supervised analysis of the conditional distribution of the students' heights is highly associated with their gender (see Table~\ref{tab:heights}, right part).

\begin{figure}
 \centering
   \includegraphics[width=0.8\textwidth]{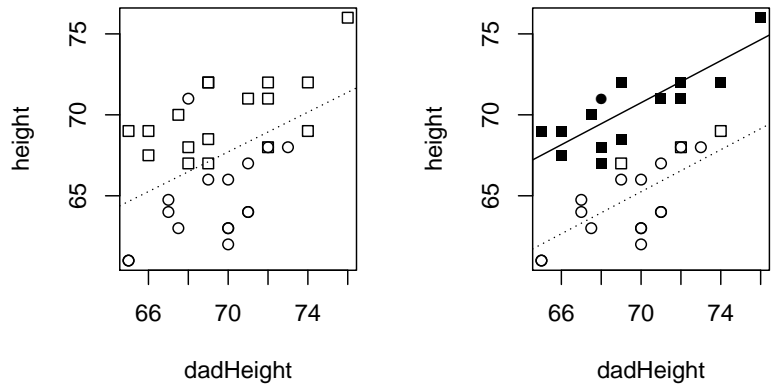}
  \caption{Scatterplot of the student's and father's heights and fitted regression lines using a Gaussian linear model (left panel) and a clusterwise Gaussian linear model with two components having the same regression coefficients (right panel).}\label{fig:regression_heights}
\end{figure}

The examined dataset is characterised by the presence of two different cluster structures that have been detected by combining methods for unsupervised and supervised statistical learning. In particular, the identification of such structures has required the specification of a model for the joint p.d.f.~of the three examined variables obtained as the product of a mixture of two bivariate Gaussian distributions (for the marginal p.d.f.~of the parents' heights) and a mixture of two univariate Gaussian linear regression models with the same regression coefficients (for the conditional p.d.f.~of the students' height). This model is denoted as $M_6$ in Section~\ref{sec:ripresa esempio intro}, where further results obtained from the analysis of this dataset are presented. The basic idea of analysing the same dataset using different types of models in order to extract relevant information from different perspectives is presented in a formal and more general way in Section~\ref{sec:framework}.

\section{A general framework for clustering and regression}\label{sec:framework}

\subsection{Models for clustering and regression}\label{sec:model}
For the ease of exposition assume initially that the examined dataset may be characterised by up to two unknown cluster structures, $S_1$ and $S_2$. Let $\mathbf{X}=(X_1, \ldots, X_L)$ be the random vector of $L$ observed continuous variables and $\mathbf{X}^{S_1}$, $\mathbf{X}^{S_2}$, $\mathbf{X}^{U}$ be a splitting of $\mathbf{X}$ into three non-overlapping sub-vectors, where $\mathbf{X}^{S_2}$ and $\mathbf{X}^{U}$ can be empty. In particular, the sub-vector $\mathbf{X}^{S_g}$ is composed of $L_g$ variables that provide information about the cluster structure $S_g$ ($g=1,2$), while the sub-vector $\mathbf{X}^{U}$ is composed of $L_U$ uninformative variables, with $L_1+L_2+L_U = L$.

The following set of assumptions is specified.

\noindent \textit{A1)} The marginal distribution of $\mathbf{X}^{S_1}$ is given by a Gaussian mixture model. Namely:
\begin{equation}\label{mn1}
f\left(\mathbf{x}^{S_1}; \boldsymbol{\theta}_1\right)=\sum_{k=1}^{K_1} \pi^{(1)}_k \phi_{L_1}\left( \mathbf{x}^{S_1};\boldsymbol{\mu}^{(1)}_k,\boldsymbol{\Sigma}^{(1)}_k\right),
\end{equation}
where $K_1 \geq 1$, $\boldsymbol{\theta}_1=(\boldsymbol{\pi}_1, \boldsymbol{\mu}_1, \boldsymbol{\sigma}_1)$,
$\boldsymbol{\pi}_1=\left(\pi^{(1)}_1,\ldots, \pi^{(1)}_{K_1}\right)$, $\boldsymbol{\mu}_1=\left(\boldsymbol{\mu}^{(1)}_1, \ldots, \boldsymbol{\mu}^{(1)}_{K_1}\right)$,
$\boldsymbol{\sigma}_1=\left(\boldsymbol{\Sigma}^{(1)}_1, \ldots, \boldsymbol{\Sigma}^{(1)}_{K_1}\right)$, and $\phi_{L_1}\left(\cdot;\boldsymbol{\mu}^{(1)}_k,\boldsymbol{\Sigma}^{(1)}_k\right)$ denotes the p.d.f.~of the $L_1$-dimensional normal distribution with mean vector $\boldsymbol{\mu}^{(1)}_k$ and covariance matrix $\boldsymbol{\Sigma}^{(1)}_k$.

\noindent \textit{A2)} The conditional p.d.f.~of $\mathbf{X}^{S_2}$ given $\mathbf{X}^{S_1}$ is equal to
\begin{equation}\label{mn2a}
f\left(\mathbf{x}^{S_2}|\mathbf{x}^{S_1}; \boldsymbol{\theta}_2\right)=\sum_{k=1}^{K_2} \pi^{(2)}_k \phi_{L_2}\left(\mathbf{x}^{S_2};\boldsymbol{\mu}^{(2)}_k,\boldsymbol{\Sigma}^{(2)}_k\right),
\end{equation}
where $K_2 \geq 1$,
\begin{equation}\label{mn2b}
\boldsymbol{\mu}^{(2)}_k=\boldsymbol{\beta}_{0}+\boldsymbol{\lambda}^{(2)}_{k} + \mathbf{B}_{21}\mathbf{x}^{S_1}, \ k=1, \ldots, K_2,
\end{equation}
with $\boldsymbol{\beta}_{0}$ and $\boldsymbol{\lambda}^{(2)}_{k}$ denoting $L_2$-dimensional vectors, and
$\mathbf{B}_{21}$ representing a $L_2 \times L_1$ matrix.
The condition defined by equation (\ref{mn2b})
is equivalent to assuming that the dependence of $\mathbf{X}^{S_2}$ on $\mathbf{X}^{S_1}$
is defined by a multivariate linear regression model whose error terms follow a mixture of $K_2$
Gaussian components. Specifically,
\begin{equation} \label{mixregression_nonid}
\mathbf{X}^{S_2} = \boldsymbol{\beta}_{0} + \mathbf{B}_{21}\mathbf{x}^{S_1} + \boldsymbol{\epsilon}_2, \ \boldsymbol{\epsilon}_2 \sim \sum_{k=1}^{K_2} \pi^{(2)}_k N_{L_2}\left(\boldsymbol{\lambda}^{(2)}_{k},\boldsymbol{\Sigma}^{(2)}_{k}\right),
\end{equation}
where $N_{L_2}\left(\boldsymbol{\lambda}^{(2)}_{k},\boldsymbol{\Sigma}^{(2)}_{k}\right)$ denotes the $L_2$-dimensional normal distribution with mean vector $\boldsymbol{\lambda}^{(2)}_{k}$ and covariance matrix $\boldsymbol{\Sigma}^{(2)}_{k}$. Thus, the conditional distribution of $\mathbf{X}^{S_2}$ given $\mathbf{X}^{S_1}$ is described by a mixture of $K_2$ Gaussian linear regression models with the constraint that the effect of $\mathbf{X}^{S_1}$ on the expected value of $\mathbf{X}^{S_2}$ is the same for all the $K_2$ components of the mixture (\ref{mn2a}). In order to guarantee identifiability of model (\ref{mixregression_nonid}), it is necessary to require some constraints on $\boldsymbol{\beta}_{0}$ or the $\boldsymbol{\lambda}^{(2)}_{k}$'s. Namely, $\boldsymbol{\beta}_{0}=\textbf{0}$ or
$\sum_{k} \pi^{(2)}_k \boldsymbol{\lambda}^{(2)}_{k} = \mathbf{0}$. This problem does not arise if model (\ref{mn2b}) is directly parameterized as follows:
\begin{equation}\label{mn2c}
\boldsymbol{\mu}^{(2)}_k=\boldsymbol{\gamma}^{(2)}_{k}+ \mathbf{B}_{21}\mathbf{x}^{S_1}, \ k=1, \ldots, K_2,
\end{equation}
where $\boldsymbol{\gamma}^{(2)}_{k}=\boldsymbol{\beta}_{0} + \boldsymbol{\lambda}^{(2)}_{k}$. Then, $\boldsymbol{\theta}_2=(\boldsymbol{\pi}_2, \boldsymbol{\gamma}_2, \mathbf{B}_{21}, \boldsymbol{\sigma}_2)$,
$\boldsymbol{\pi}_2=\left(\pi^{(2)}_1,\ldots, \pi^{(2)}_{K_2}\right)$, $\boldsymbol{\gamma}_2=\left(\boldsymbol{\gamma}^{(2)}_1, \ldots, \boldsymbol{\gamma}^{(2)}_{K_2}\right)$,
 $\boldsymbol{\sigma}_2=\left(\boldsymbol{\Sigma}^{(2)}_1, \ldots, \boldsymbol{\Sigma}^{(2)}_{K_2}\right)$ \citep[for further details see][]{soffritti2011}.

\noindent \textit{A3)} The conditional distribution of $\mathbf{X}^{U}$ given $\left(\mathbf{X}^{S_1}, \mathbf{X}^{S_2}\right)$ follows a Gaussian linear regression model. Specifically:
\begin{equation} \label{regression}
\mathbf{X}^{U} = \boldsymbol{\alpha}_{0} + \mathbf{A}_{1}\mathbf{x}^{S_1} + \mathbf{A}_{2}\mathbf{x}^{S_2} + \boldsymbol{\epsilon}_U, \ \boldsymbol{\epsilon}_U \sim N_{L_U}\left(\boldsymbol{0},\boldsymbol{\Sigma}_{U}\right),
\end{equation}
whose parameters are $\boldsymbol{\theta}_U=\left(\boldsymbol{\alpha}_0, \mathbf{A}_1, \mathbf{A}_2,
\boldsymbol{\Sigma}_U\right)$, where $\boldsymbol{\alpha}_{0}$ is a $L_U$-dimensional vector, $\mathbf{A}_1$ and $\mathbf{A}_2$ represent matrices of dimension $L_U \times L_1$ and $L_U \times L_2$, respectively.

Thus, the joint p.d.f.~of $\mathbf{X}$ can be obtained as follows:
\begin{eqnarray} \nonumber
f(\mathbf{x};\boldsymbol{\theta}) & = & \sum_{k=1}^{K_1} \pi^{(1)}_k \phi_{L_1}\left( \mathbf{x}^{S_1};\boldsymbol{\mu}^{(1)}_k,\boldsymbol{\Sigma}^{(1)}_k\right) \\ \nonumber
& \times &
\sum_{k=1}^{K_2} \pi^{(2)}_k \phi_{L_2}\left(\mathbf{x}^{S_2};\boldsymbol{\gamma}^{(2)}_{k}+ \mathbf{B}_{21}\mathbf{x}^{S_1},\boldsymbol{\Sigma}^{(2)}_k\right)  \\ \label{modello nuovo}
& \times & \phi_{L_U}\left(\mathbf{x}^{U}; \boldsymbol{\alpha}_{0} + \mathbf{A}_{1}\mathbf{x}^{S_1} + \mathbf{A}_{2}\mathbf{x}^{S_2},\boldsymbol{\Sigma}_U\right),
\end{eqnarray}
where $\boldsymbol{\theta}= \left(\boldsymbol{\theta}_1, \boldsymbol{\theta}_2, \boldsymbol{\theta}_U\right)$.

Some approaches for unsupervised and/or supervised analysis illustrated in Sections~\ref{sec:intro} and \ref{sec:intro_example} can be obtained from model (\ref{modello nuovo}).
\begin{itemize}
\item
When $\mathbf{X}^{S_2}=\mathbf{X}^{U}=\emptyset$, the Gaussian mixture model used to detect the presence of a cluster structure in the joint distribution of $\mathbf{X}$ is obtained. The graphical representation of such a model is depicted in Figure~\ref{fig:modello1}(a), where $Z$ denotes a nominal latent variable affecting the probability distribution of $\mathbf{X}$. If the number of categories of such a variable is greater than 1, then the joint distribution of $\mathbf{X}$ provides information about a latent cluster structure. Thus, $Z$ can be considered as an indicator of cluster membership. This is the approach to cluster analysis based on Gaussian mixture models.

\item
When $\mathbf{X}^{S_2}=\emptyset$, $K_1 \geq 2$, $\mathbf{X}^{U} \neq \emptyset$, a cluster structure is assumed to be hidden in the variable sub-space $\mathbf{X}^{S_1}$. Specifically, $\mathbf{X}^{S_1}$ and $\mathbf{X}^{U}$ represent the vectors of the informative and uninformative variables, respectively. The model associated with this situation can be graphically represented as in Figure~\ref{fig:modello1}(b). It represents the tool for model-based cluster analysis with variable selection according to the approach proposed by \citet{raftery2006}.

\item
When $K_1=1$, $\mathbf{X}^{S_2} \neq \emptyset$, $\mathbf{X}^{U} = \emptyset$, model~(\ref{modello nuovo}) reduces to either a Gaussian linear regression model (if $K_2=1$) or a mixture of Gaussian linear regression models (if $K_2 >1$) with the same regression coefficients \citep{soffritti2011} and Gaussian random covariates.

\item
When $K_1>1$, $\mathbf{X}^{S_2} \neq \emptyset$, $K_2 >1$, a novel model with two latent cluster structures is obtained. The marginal distribution of the variable sub-vector $\mathbf{X}^{S_1}$ provides information about the first structure while the second structure is hidden in the conditional distribution of $\mathbf{X}^{S_2}$ given $\mathbf{X}^{S_1}$. Thus, a novel approach for detecting two cluster structures from two possibly dependent sub-vectors of random variables is obtained. By setting $\mathbf{B}_{21}=\boldsymbol{0}$ in equation~(\ref{mn2c}), the second latent structure is, in fact, defined in the marginal distribution of $\mathbf{X}^{S_2}$, thus leading to the model proposed by \citet{galimberti2007}. In both cases $\mathbf{X}^{S_1}$ and $\mathbf{X}^{S_2}$ represent vectors of informative variables. If, in addition, $\mathbf{X}^{U} \neq \emptyset$, then $\mathbf{X}^{U}$ is the vector of uninformative variables. In this latter case model~(\ref{modello nuovo}) also allows to perform variable selection. The graphical representation of model (\ref{modello nuovo}) when $(\mathbf{X}^{S_1}, \mathbf{X}^{S_2}, \mathbf{X}^{U})$ is a partition of $\mathbf{X}$ and both $K_1$ and $K_2$ are greater than 1 is given in Figure~\ref{fig:modello1}(c), where $Z_1$ denotes the nominal latent variable (with $K_1$ categories) that affects the probability distribution of $\mathbf{X}^{S_1}$, and $Z_2$ denotes the nominal latent variable (with $K_2$ categories) affecting the probability distribution of $\mathbf{X}^{S_2}|\mathbf{X}^{S_1}$. Thus, in model (\ref{modello nuovo}), the sub-vector $\mathbf{X}^{S_2}$ is assumed to be conditionally independent
of $Z_1$ given $\mathbf{X}^{S_1}$. Furthermore, $Z_1$ and $Z_2$ are assumed to be independent.
\end{itemize}

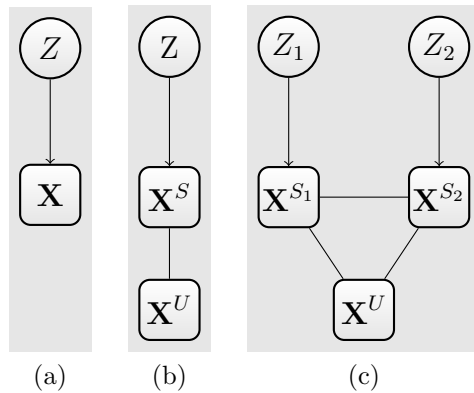
\begin{figure}[h]
\centering
\subfloat[ ][ ]		
{
\begin{tikzpicture}
[latent/.style={circle,draw,thick,inner sep=0pt,minimum size=8mm,top color=white,bottom color=black!10},
manifest/.style={rounded corners,draw,thick,inner sep=0pt,minimum size=8mm,top color=white,bottom color=black!10}]

\node (Z) at ( 0,2) [latent] {$Z$};
\node (X) at ( 0,0.05) [manifest] {$\textbf{X}$};
\node (phantom) at (0,-1.75) [] {}; 

\draw [->] (Z) -- (X);

\begin{scope}[on background layer]
\node [fill=black!10,fit=(Z) (X) (phantom)] {};
\end{scope}

\end{tikzpicture}
}
\, 
\subfloat[ ][ ]		
{
\begin{tikzpicture}
[latent/.style={circle,draw,thick,inner sep=0pt,minimum size=8mm,top color=white,bottom color=black!10},
manifest/.style={rounded corners,draw,thick,inner sep=0pt,minimum size=8mm,top color=white,bottom color=black!10}]

\node (Z) at (0,2) [latent] {Z};
\node (XS) at (0,0) [manifest] {$\textbf{X}^S$};
\node (XU) at (0,-1.5)  [manifest] {$\textbf{X}^U$};

\draw [->] (Z) -- (XS);
\draw [-] (XS) -- (XU);

\begin{scope}[on background layer]
\node [fill=black!10,fit=(Z) (XS) (XU)] {};
\end{scope}

\end{tikzpicture}
}\,
\subfloat[ ][ ]		
{
\begin{tikzpicture}
[latent/.style={circle,draw,thick,inner sep=0pt,minimum size=8mm,top color=white,bottom color=black!10},
manifest/.style={rounded corners,draw,thick,inner sep=0pt,minimum size=8mm,top color=white,bottom color=black!10}]

\node (Z1) at ( -2,2) [latent] {$Z_1$};
\node (Z2) at ( 0,2) [latent] {$Z_2$};
\node (X1) at ( -2,0) [manifest] {$\textbf{X}^{S_1}$};
\node (X2) at ( 0,0) [manifest] {$\textbf{X}^{S_2}$};
\node (Xu) at ( -1,-1.5) [manifest] {$\textbf{X}^U$};

\draw [->] (Z1) -- (X1);
\draw [->] (Z2) -- (X2);
\draw [-] (X1) -- (X2);
\draw [-] (X1) -- (Xu);
\draw [-] (X2) -- (Xu);

\begin{scope}[on background layer]
\node [fill=black!10,fit=(Z1) (Z2) (X1) (X2) (Xu)] {};
\end{scope}

\end{tikzpicture}
}
\caption{Graphical representations of three models obtained from equation (\ref{modello nuovo}).}
\label{fig:modello1}
\end{figure}

The model defined in equation (\ref{modello nuovo}) places two main restrictions. First, it forces the uninformative variables to linearly depend on all the remaining variables (see equation (\ref{regression}). Furthermore, it prevents the possible presence of uninformative variables that are independent of any other variable. Models that do not impose these restrictions can be obtained as follows. Whenever $\mathbf{X}^{U} \neq \emptyset$, let $\mathbf{X}^{S_1}_U \subseteq \mathbf{X}^{S_1}$ and, if $\mathbf{X}^{S_2} \neq \emptyset$, $\mathbf{X}^{S_2}_U \subseteq \mathbf{X}^{S_2}$ denote the sub-vectors of $\mathbf{X}^{S_1}$ and $\mathbf{X}^{S_2}$, respectively, useful to predict the uninformative variables $\mathbf{X}^{U}$ according to a Gaussian linear regression model. Thus, equation (\ref{regression}) is modified as follows:
\begin{equation} \label{regression_gen}
\mathbf{X}^{U} = \boldsymbol{\alpha}_{0} + \mathbf{A}_{1U}\mathbf{x}^{S_1}_U + \mathbf{A}_{2U}\mathbf{x}^{S_2}_U + \boldsymbol{\epsilon}_U, \ \boldsymbol{\epsilon}_U \sim N_{L_U}\left(\boldsymbol{0},\boldsymbol{\Sigma}_{U}\right),
\end{equation}
where $\mathbf{A}_{1U}$ and $\mathbf{A}_{2U}$ are matrices of dimensions $L_U \times L_{1U}$ and $L_U \times L_{2U}$, respectively, with $L_{1U}$ and $L_{2U}$ denoting the lengths of sub-vectors $\mathbf{X}^{S_1}_U$ and $\mathbf{X}^{S_2}_U$.
Furthermore, let $(\mathbf{X}^{S_1}, \mathbf{X}^{S_2}, \mathbf{X}^{U}, \mathbf{X}^{I})$ be a splitting of $\mathbf{X}$ into four non-overlapping sub-vectors, where the additional sub-vector $\mathbf{X}^{I}$ can be empty.
Specifically, $\mathbf{X}^{I}$ contains $L_I$ uninformative variables that are also independent of all the remaining variables. Thus, the resulting p.d.f.~of $\mathbf{X}$ is defined as follows:
\begin{eqnarray} \nonumber
f(\mathbf{x};\boldsymbol{\theta}) & = & \sum_{k=1}^{K_1} \pi^{(1)}_k \phi_{L_1}\left( \mathbf{x}^{S_1};\boldsymbol{\mu}^{(1)}_k,\boldsymbol{\Sigma}^{(1)}_k\right)\\ \nonumber
& \times &
\sum_{k=1}^{K_2} \pi^{(2)}_k \phi_{L_2}\left(\mathbf{x}^{S_2};\boldsymbol{\gamma}^{(2)}_{k}+ \mathbf{B}_{21}\mathbf{x}^{S_1},\boldsymbol{\Sigma}^{(2)}_k\right) \\ \label{modello nuovo_gen}
& \times & \phi_{L_U}\left(\mathbf{x}^{U}; \boldsymbol{\alpha}_{0} + \mathbf{A}_{1U}\mathbf{x}^{S_1}_U + \mathbf{A}_{2U}\mathbf{x}^{S_2}_U,\boldsymbol{\Sigma}_U\right) \times \phi_{L_I}\left(\mathbf{x}^{I}; \boldsymbol{\mu}_{I},\boldsymbol{\Sigma}_I\right),
\end{eqnarray}
where  $\boldsymbol{\theta}= \left(\boldsymbol{\theta}_1, \boldsymbol{\theta}_2, \boldsymbol{\theta}_U, \boldsymbol{\theta}_I\right)$, with $\boldsymbol{\theta}_I=(\boldsymbol{\mu}_{I},\boldsymbol{\Sigma}_I)$.

The graphical representation of the models obtained from equation (\ref{modello nuovo_gen}) by removing the second restriction is given in Figure~\ref{fig:modello2}(a); Figure~\ref{fig:modello2}(b) represents models resulting from the removal of both restrictions. When $\mathbf{X}^{S_2}=\emptyset$, $\mathbf{X}^{U} \neq \emptyset$ and $\mathbf{X}^{I} = \emptyset$, equation (\ref{modello nuovo_gen}) gives the model proposed by \citet{maugis2009a}; if $\mathbf{X}^{I} \neq \emptyset$, the model introduced in \citet{maugis2009b} is obtained.



\begin{figure}[h]
\centering
\subfloat[ ][ ]		
{
\begin{tikzpicture}
[latent/.style={circle,draw,thick,inner sep=0pt,minimum size=8mm,top color=white,bottom color=black!10},
manifest/.style={rounded corners,draw,thick,inner sep=2pt,minimum size=8mm,top color=white,bottom color=black!10},
manifest2/.style={rounded corners,draw,thick}]	

\node (Z1) at ( -2,2) [latent] {$Z_1$};
\node (Z2) at ( 0,2) [latent] {$Z_2$};
\node (X1) at ( -2,0) [manifest] {$\textbf{X}^{S_1}$};
\node (X2) at ( 0,0) [manifest] {$\textbf{X}^{S_2}$};
\node (Xu) at ( -1,-1.5) [manifest] {$\textbf{X}^U$};
\node (Xi) at (1, -1.5) [manifest] {$\textbf{X}^I$};

\node (Xui) [manifest2] [fit=(Xu) (Xi), draw]{};

\draw [->] (Z1) -- (X1);
\draw [->] (Z2) -- (X2);
\draw [-] (X1) -- (X2);
\draw [-] (X1) -- (Xu);
\draw [-] (X2) -- (Xu);

\begin{scope}[on background layer]
\node [fill=black!10,fit=(Z1) (Z2) (X1) (X2) (Xu) (Xi) (Xui)] {};
\node [top color=white, bottom color=black!10,fit=(Xu) (Xi)] {};
\end{scope}

\end{tikzpicture}
}
\,
\subfloat[ ][ ]		
{
\begin{tikzpicture}
[latent/.style={circle,draw,fill,thick,inner sep=0pt,minimum size=8mm,top color=white,bottom color=black!10},
manifest/.style={rounded corners,draw,thick,inner sep=2pt,minimum size=8mm,top color=white,bottom color=black!10},
manifest2/.style={rounded corners,draw,thick}]

\node (Z1) at ( -2.5,2) [latent] {$Z_1$};
\node (Z2) at ( 1.5,2) [latent] {$Z_2$};
\node (X1notu) at ( -3.5,0) [manifest] {$\bar{\textbf{X}}^{S_1}_{U}$};
\node (X1u) at ( -1.5,0) [manifest] {$\textbf{X}^{S_1}_U$};
\node (X2u) at ( 0.5,0) [manifest] {$\textbf{X}^{S_2}_U$};
\node (X2notu) at ( 2.5,0) [manifest] {$\bar{\textbf{X}}^{S_2}_{U}$};
\node (Xu) at ( -0.5,-1.5) [manifest] {$\textbf{X}^U$};
\node (Xi) at ( 1.5,-1.5) [manifest] {$\textbf{X}^I$};

\node (X1) [manifest2] [fit=(X1notu) (X1u), draw]{};
\node (X2) [manifest2] [fit=(X2notu) (X2u), draw]{};
\node (Xui) [manifest2] [fit=(Xu) (Xi), draw]{};

\draw [->] (Z1) -- (X1);
\draw [->] (Z2) -- (X2);
\draw [dashed] (X1u) -- (X1notu);
\draw [dashed] (X2u) -- (X2notu);
\draw [-] (X1u) -- (Xu);
\draw [-] (X2u) -- (Xu);

\draw [-] (X1) -- (X2);

\begin{scope}[on background layer]
\node [fill=black!10,fit=(Z1) (Z2) (X1) (X2) (Xu) (Xi) (Xui)] {};
\node [top color=white, bottom color=black!10,fit=(X1u) (X1notu)] {};
\node [top color=white, bottom color=black!10,fit=(X2u) (X2notu)] {};
\node [top color=white, bottom color=black!10,fit=(Xu) (Xi)] {};
\end{scope}

\end{tikzpicture}
}

\caption{Graphical representations of two models obtained from equation (\ref{modello nuovo_gen}), where $\mathbf{\bar{X}}_U^{S_1}=\mathbf{X}^{S_1} \smallsetminus \mathbf{X}^{S_1}_U$,
  $\mathbf{\bar{X}}_U^{S_2}=\mathbf{X}^{S_2} \smallsetminus \mathbf{X}_U^{S_2}$.}

\label{fig:modello2}
\end{figure}
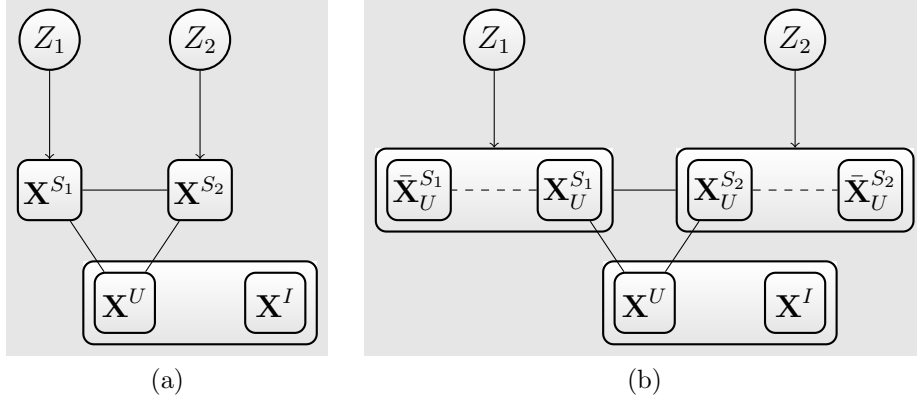

Whenever $\mathbf{X}^{S_2} \neq \emptyset$, further more general models can be defined. For example, in equation (\ref{mixregression_nonid}) the variables in the sub-vector $\mathbf{X}^{S_2}$ are forced to linearly depend on all the variables belonging to $\mathbf{X}^{S_1}$. This restriction can be removed as follows.
Let $\mathbf{X}^{S_1}_{S_2} \subseteq \mathbf{X}^{S_1}$ denote the sub-vector of $\mathbf{X}^{S_1}$ composed of the
predictors of $\mathbf{X}^{S_2}$. Then, $\boldsymbol{\mu}^{(2)}_k$ in equation (\ref{mn2c}) becomes
\begin{equation}\label{mn2c_rid}
\boldsymbol{\mu}^{(2)}_k=\boldsymbol{\gamma}^{(2)}_{k}+ \mathbf{B}^*_{21}\mathbf{x}^{S_1}_{S_2}, \ k=1, \ldots, K_2,
\end{equation}
where the number of columns of matrix $\mathbf{B}^*_{21}$ coincides with the length of $\mathbf{X}^{S_1}_{S_2}$.
If, in addition, $L_2 \geq 2$, it is possible to define a model in which a different sub-vector of the variables belonging to $\mathbf{X}^{S_1}$ can be employed for predicting each variable in $\mathbf{X}^{S_2}$. Such a model can be specified as follows. Let $\mathbf{X}^{S_2}[l]$ be the $l$-th variable of $\mathbf{X}^{S_2}$ and $\mathbf{X}^{S_1}_{l}$ be the sub-vector of $\mathbf{X}^{S_1}$ containing the predictors of $\mathbf{X}^{S_2}[l]$. Then, $\boldsymbol{\mu}^{(2)}_k$ can be obtained as follows:
\begin{equation}\label{mn2c_rid_seemingly}
\boldsymbol{\mu}^{(2)}_k[l]=\boldsymbol{\gamma}^{(2)}_{k}[l]+ \mathbf{x}'^{S_1}_{l} \boldsymbol{\beta}_l, \ l=1, \ldots, L_2,
\end{equation}
where the notation $\mathbf{a}[l]$ is used to denote the $l$-th element of vector $\mathbf{a}$. The joint model for the vector $\mathbf{X}^{S_2}$ resulting from equation~(\ref{mn2c_rid_seemingly}) represents a seemingly unrelated linear regression model with a Gaussian mixture for the errors \citep[for further details see][]{galimberti2015}.

Models (\ref{modello nuovo}) and (\ref{modello nuovo_gen}) can be easily modified so as to admit that the examined dataset may be characterised by up to, say, $G$ unknown cluster structures, where $G$ can be greater than 2. Let $(\mathbf{X}^{S_1}, \mathbf{X}^{S_2}, \ldots, \mathbf{X}^{S_G}, \mathbf{X}^{U})$ be a splitting of $\mathbf{X}$ into $G+1$ non-overlapping sub-vectors, where
$\mathbf{X}^{S_g}, g=2 \ldots, G,$ can be empty, and $L_g$ denotes the length of $\mathbf{X}^{S_g}$ ($0 \leq L_g \leq L-\sum_{h=1}^{g-1} L_h$). For the marginal distribution of $\mathbf{X}^{S_1}$ the same assumption \textit{A1)} used for the previous models still holds. The remaining assumptions are modified as follows.

\textit{A2*)} For $g=2, \ldots, G$, the conditional p.d.f.~of $\mathbf{X}^{S_g}$ given $(\mathbf{X}^{S_1}, \ldots, \mathbf{X}^{S_{g-1}})$ is equal to
\begin{equation}\label{mnga}
f\left(\mathbf{x}^{S_g}|(\mathbf{x}^{S_1}, \ldots, \mathbf{x}^{S_{g-1}}); \boldsymbol{\theta}_g\right)=\sum_{k=1}^{K_g} \pi^{(g)}_k \phi_{L_g}\left(\mathbf{x}^{S_g};\boldsymbol{\mu}^{(g)}_k,\boldsymbol{\Sigma}^{(g)}_k\right),
\end{equation}
where $K_g \geq 1$,
\begin{equation}\label{mngb}
\boldsymbol{\mu}^{(g)}_k=\boldsymbol{\gamma}^{(g)}_{k} + \sum_{h=1}^{g-1} \mathbf{B}_{gh}\mathbf{x}^{S_h},
\end{equation}
$\mathbf{B}_{gh}$ is a $L_g \times L_h$ matrix, and $\boldsymbol{\theta}_g=(\boldsymbol{\pi}_g, \boldsymbol{\gamma}_g, \mathbf{B}_{g1}, \ldots, \mathbf{B}_{g,g-1}, \boldsymbol{\sigma}_g)$,
$\boldsymbol{\pi}_g=\left(\pi^{(g)}_1,\ldots, \pi^{(g)}_{K_g}\right)$, $\boldsymbol{\gamma}_g=\left(\boldsymbol{\gamma}^{(g)}_1, \ldots, \boldsymbol{\gamma}^{(g)}_{K_g}\right)$,
$\boldsymbol{\sigma}_g=\left(\boldsymbol{\Sigma}^{(g)}_1, \ldots, \boldsymbol{\Sigma}^{(g)}_{K_g}\right)$.
Thus, the dependence of $\mathbf{X}^{S_g}$ on $(\mathbf{X}^{S_1}, \ldots, \mathbf{X}^{S_{g-1}})$
is described by a multivariate linear regression model whose error terms follow a mixture of $K_g$
Gaussian components, with the constraint that the effect of $\mathbf{X}^{S_h}$, $h=1, \ldots, g-1$, on $\mathbf{X}^{S_g}$ is the same for all the $K_g$ components of the mixture (\ref{mnga}).

\textit{A3*)} The conditional distribution of $\mathbf{X}^{U}$ given $\left(\mathbf{X}^{S_1}, \ldots \mathbf{X}^{S_G}\right)$ follows a Gaussian linear regression model; namely:
\begin{equation} \label{regressionG}
\mathbf{X}^{U} = \boldsymbol{\alpha}_{0} + \sum_{g=1}^G \mathbf{A}_{g}\mathbf{x}^{S_g} + \boldsymbol{\epsilon}_U, \ \boldsymbol{\epsilon}_U \sim N_{L_U}\left(\boldsymbol{0},\boldsymbol{\Sigma}_{U}\right).
\end{equation}
Thus, the joint p.d.f.~of $\mathbf{X}$ can be obtained as follows:
\begin{eqnarray} \nonumber
f(\mathbf{x};\boldsymbol{\theta}) & = & \sum_{k=1}^{K_1} \pi^{(1)}_k \phi_{L_1}\left( \mathbf{x}^{S_1};\boldsymbol{\mu}^{(1)}_k,\boldsymbol{\Sigma}^{(1)}_k\right) \\ \nonumber
& \times & \prod_{g=2}^G \left[ \sum_{k=1}^{K_g} \pi^{(g)}_k \phi_{L_g}\left(\mathbf{x}^{S_g};\boldsymbol{\gamma}^{(g)}_{k} + \sum_{h=1}^{g-1} \mathbf{B}_{gh}\mathbf{x}^{S_h},\boldsymbol{\Sigma}^{(g)}_k\right) \right] \\ \label{modello nuovoG}
& \times &  \phi_{L_U}\left(\mathbf{x}^{U}; \boldsymbol{\alpha}_{0} + \sum_{g=1}^G \mathbf{A}_{g}\mathbf{x}^{S_g},\boldsymbol{\Sigma}_U\right),
\end{eqnarray}
where $\boldsymbol{\theta}= \left(\boldsymbol{\theta}_1, \ldots, \boldsymbol{\theta}_G, \boldsymbol{\theta}_U\right)$, with $\boldsymbol{\theta}_U=\left(\boldsymbol{\alpha}_0, \mathbf{A}_1, \ldots, \mathbf{A}_G,
\boldsymbol{\Sigma}_U\right)$.

Similarly to the general models defined by removing restrictions from model (\ref{modello nuovo}), more general models can be defined also from model (\ref{modello nuovoG}). The formal specification of such models is omitted for the ease of presentation.

\subsection{Model identifiability}\label{sec:identifiability}

Some results concerning parameter identifiability are developed for a class of models obtained from equation (\ref{modello nuovoG}). Namely, models characterised by different values of $G$, different partitions of $\mathbf{X}$ into $G$ (or $G+1$) sub-vectors and different values of $K_1, \ldots, K_G$ are examined. Such models can be described from the partitions $(S_1, \ldots, S_G, U)$ of the variable index set $\mathcal{I}=\{1, \ldots, L\}$ (with $U$ possibly equal to $\emptyset$). In particular,
the model class is obtained by admitting that $G \in \mathcal{I}$, $K_g \in \{2, \ldots, K_{g max}\}$
for $g=1, \ldots, G$, and $(S_1, \ldots, S_G, U) \in \mathcal{V}_G$, where $K_{g max}$ denotes the maximum number of components specified by the researcher for the mixture model defined in equation (\ref{mnga}), and $\mathcal{V}_G$ is the family of the
partitions of the variable index set $\mathcal{I}$ into $G$ or $G+1$ elements (as $U$ can be equal to the empty set).
That is, $\mathcal{V}_G=\{ (S_1, \ldots, S_G, U); (S_1, \ldots, S_G, U) \in \mathcal{F}^{G+1}, S_g \ne
\emptyset \ \forall g, S_g \cap S_h = \emptyset \ \forall g \ne h, U= \mathcal{I}\setminus \cup_{g=1}^G S_g \}$,
where $\mathcal{F}$ denotes the family of subsets of $\mathcal{I}$. The resulting model class is denoted as $\mathcal{M}$, and the generic element of $\mathcal{M}$ is denoted as
 $M=(G, S_1, \ldots, S_G, U, K_1, \ldots, K_G)$.
For each model
$M \in \mathcal{M}$ the parameterized densities are denoted by $f(\cdot|\boldsymbol{\theta}_M)$, with
$\boldsymbol{\theta}_M=(\boldsymbol{\theta}_1, \ldots, \boldsymbol{\theta}_G, \boldsymbol{\theta}_U)$.
The corresponding parameter space is denoted by $\mathcal{Q}_{M}$.

Let $M=(G, S_1, \ldots, S_g, S_{g+1}, \ldots, S_G, U, K_1, \ldots, K_g, K_{g+1}, \ldots, K_G)
\in \mathcal{M}$ be a given model under consideration. Then,

\begin{align} \label{modello nuovo,sg,sg+1} \nonumber
f\left(\mathbf{x}^{S_g \cup S_{g+1}}|\mathbf{x}^{\cup_{h=1}^{g-1} S_h}\right)  =
\sum_{k=1}^{K_g} \pi^{(g)}_k \phi_{L_g}\left(\mathbf{x}^{S_g};
 \boldsymbol{\gamma}^{(g)}_{k}+ \sum_{h=1}^{g-1} \mathbf{B}_{gh}\mathbf{x}^{S_h}, \boldsymbol{\Sigma}^{(g)}_k\right)\\
\times
\sum_{k'=1}^{K_{g+1}} \pi^{(g+1)}_{k'} \phi_{L_{g+1}}\left(\mathbf{x}^{S_{g+1}};
 \boldsymbol{\gamma}^{(g+1)}_{k'}+ \sum_{h=1}^{g} \mathbf{B}_{(g+1)h}\mathbf{x}^{S_h}, \boldsymbol{\Sigma}^{(g+1)}_{k'}\right).
\end{align}

Then, consider the Cartesian product $\{1, \ldots, K_g \} \times \{1, \ldots, K_{g+1} \}$ and let the $t$-th element of such a set be denoted as $c_t=(k,k')$, where $t=1, \ldots, K_g \cdot K_{g+1}$. Using this notation it is possible to rewrite equation (\ref{modello nuovo,sg,sg+1}) as follows:
\begin{equation} \label{modello nuovo,sg,sg+1, bis}
f\left(\mathbf{x}^{S_t}|\mathbf{x}^{\cup_{h=1}^{g-1} S_h}\right)=
\sum_{t=1}^{K_t} \pi_t \phi_{L_t}\left(\mathbf{x}^{S_t};
\boldsymbol{\gamma}_t+ \sum_{h=1}^{g-1} \mathbf{B}_{th}\mathbf{x}^{S_h}, \boldsymbol{\Sigma}_{t}\right),
\end{equation}
where $S_t=S_g \cup S_{g+1}$, $K_t=K_g \cdot K_{g+1}$, $L_t=L_g+L_{g+1}$,
\begin{eqnarray}
\pi_t & = &  \pi^{(g)}_k \pi^{(g+1)}_{k'} ,\\
\boldsymbol{\gamma}_{t} & = & \left(\begin{array}{c}                                             \boldsymbol{\gamma}^{(g)}_{k} \\
\boldsymbol{\gamma}^{(g+1)}_{k'} + \mathbf{B}_{(g+1)g}\boldsymbol{\gamma}^{(g)}_{k}
                                            \end{array} \right), \\
\boldsymbol{\Sigma}_t & = & \left[\begin{array}{cc}
                                              \boldsymbol{\Sigma}^{(g)}_k & \boldsymbol{\Sigma}^{(g)}_k\mathbf{B}_{(g+1)g}'\\
                                               \mathbf{B}_{(g+1)g}\boldsymbol{\Sigma}^{(g)}_k
                                               &  \boldsymbol{\Sigma}^{(g+1)}_{k'} + \mathbf{B}_{(g+1)g}\boldsymbol{\Sigma}^{(g)}_k\mathbf{B}_{(g+1)g}'
                                            \end{array} \right],
\end{eqnarray}
and $\mathbf{B}_{th} = \left[\begin{array}{c}
                                            \mathbf{B}_{gh}\\
                                               \mathbf{B}_{(g+1)h} + \mathbf{B}_{(g+1)g}\mathbf{B}_{gh}
                                               \end{array} \right]$.


In this case, for each $\boldsymbol{\theta}_M \in \mathcal{Q}_{M}$ there is a
$\boldsymbol{\theta}^*_{M^*} \in \mathcal{Q}_{M^*}$ such that
$f(\cdot|\boldsymbol{\theta}_M)=f(\cdot|\boldsymbol{\theta}^*_{M^*})$, where $M^*={(G-1, S_1, \ldots, S_t, \ldots, S_G, U, K_1, \ldots, K_t, \ldots, K_G)} \in \mathcal{M}$; thus, the identifiability cannot be ensured. This remark makes it possible to introduce a set of necessary conditions for identifiability. This set is defined in the following theorem.

\begin{thm}
Let $\mathcal{M}'$ be the sub-class of $\mathcal{M}$ composed of the identifiable models. Then,
$\forall$ $M=(G, S_1, \ldots, S_g, \ldots, S_G, U, K_1, \ldots, K_g, \ldots, K_G) \in \mathcal{M}'$,
$\forall g$ $(1 \leq g \leq G)$, it is not possible to find any partition $(s_a,s_b)$
of $S_g$ such that $\forall t$ $(1 \leq t \leq K_g)$ and $ \forall h < g$ the following equalities hold:
\begin{eqnarray}\label{eq:id1}
\pi^{(g)}_t & = & \pi^{(a)}_k \pi^{(b)}_{k'},\\
\boldsymbol{\gamma}^{(g)}_{t} & = & \left(\begin{array}{c}                                             \boldsymbol{\gamma}^{(a)}_{k}\\
\boldsymbol{\gamma}^{(b)}_{k'} + \mathbf{B}_{ba}\boldsymbol{\gamma}^{(a)}_{k}
                                            \end{array} \right), \\
\boldsymbol{\Sigma}^{(g)}_t & = & \left[\begin{array}{cc}
                                              \boldsymbol{\Sigma}^{(a)}_k & \boldsymbol{\Sigma}^{(a)}_k\mathbf{B}_{ba}'\\
                                               \mathbf{B}_{ba}\boldsymbol{\Sigma}^{(a)}_k
                                               & \boldsymbol{\Sigma}^{(b)}_{k'} + \mathbf{B}_{ba}\boldsymbol{\Sigma}^{(a)}_k\mathbf{B}_{ba}'
                                            \end{array} \right],\\ \label{eq:id4}
\mathbf{B}_{gh} & = & \left[\begin{array}{c}
                                            \mathbf{B}_{ah}\\
                                               \mathbf{B}_{bh} + \mathbf{B}_{ba}\mathbf{B}_{ah}
                                               \end{array} \right],
\end{eqnarray}
with $1 \leq k \leq K_{a}$, $1 \leq k' \leq K_{b}$, and $K_{a} \cdot K_{b}= K_{g}$.
\end{thm}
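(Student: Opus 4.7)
I would prove the theorem by contraposition. Suppose there exist an index $g$ with $1 \leq g \leq G$, a partition $(s_a, s_b)$ of $S_g$, and positive integers $K_a, K_b \geq 2$ with $K_a K_b = K_g$, together with parameters $\pi^{(a)}_k, \boldsymbol{\gamma}^{(a)}_k, \boldsymbol{\Sigma}^{(a)}_k$ for $k=1,\ldots,K_a$ and $\pi^{(b)}_{k'}, \boldsymbol{\gamma}^{(b)}_{k'}, \boldsymbol{\Sigma}^{(b)}_{k'}$ for $k'=1,\ldots,K_b$, along with matrices $\mathbf{B}_{ba}$ and $\mathbf{B}_{ah}, \mathbf{B}_{bh}$ for $h<g$, satisfying equations \eqref{eq:id1}--\eqref{eq:id4} under a bijection $t \leftrightarrow (k,k')$. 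My task is to construct a distinct model $M^* \in \mathcal{M}$ with parameters $\boldsymbol{\theta}^*_{M^*} \in \mathcal{Q}_{M^*}$ such that $f(\cdot|\boldsymbol{\theta}_M) = f(\cdot|\boldsymbol{\theta}^*_{M^*})$, contradicting $M \in \mathcal{M}'$.

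The natural candidate is
\[
M^* = (G+1, S_1, \ldots, S_{g-1}, s_a, s_b, S_{g+1}, \ldots, S_G, U, K_1, \ldots, K_{g-1}, K_a, K_b, K_{g+1}, \ldots, K_G),
\]
obtained by splitting $S_g$ into two consecutive groups at position $g$. The parameters of $M^*$ at positions preceding $g$ are inherited unchanged from $M$; at positions $g$ and $g+1$ they are read off from the right-hand sides of \eqref{eq:id1}--\eqref{eq:id4}; and for the remaining groups and for $\boldsymbol{\theta}_U^*$, each matrix $\mathbf{B}_{g'g}$ (for $g'>g$) and $\mathbf{A}_g$ of $M$ is split column-wise along $S_g = s_a \cup s_b$ into a pair of matrices that multiply $\mathbf{x}^{s_a}$ and $\mathbf{x}^{s_b}$, respectively. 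Since $\mathbf{B}_{g'g}\mathbf{x}^{S_g}$ equals $\mathbf{B}_{g',s_a}\mathbf{x}^{s_a} + \mathbf{B}_{g',s_b}\mathbf{x}^{s_b}$ (and analogously for $\mathbf{A}_g\mathbf{x}^{S_g}$), every factor of \eqref{modello nuovoG} attached to indices strictly greater than $g$ and to $U$ is numerically identical under $M$ and $M^*$.

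The core step is then to verify that
\[
f\bigl(\mathbf{x}^{S_g} \mid \mathbf{x}^{\cup_{h=1}^{g-1} S_h};\boldsymbol{\theta}_g\bigr) = f\bigl(\mathbf{x}^{s_a} \mid \mathbf{x}^{\cup_{h=1}^{g-1} S_h};\boldsymbol{\theta}_a^*\bigr)\, f\bigl(\mathbf{x}^{s_b} \mid \mathbf{x}^{s_a}, \mathbf{x}^{\cup_{h=1}^{g-1} S_h};\boldsymbol{\theta}_b^*\bigr).
\]
Fix $t \leftrightarrow (k,k')$. Because the upper-left block of $\boldsymbol{\Sigma}^{(g)}_t$ is $\boldsymbol{\Sigma}^{(a)}_k$ and the lower-left block is $\mathbf{B}_{ba}\boldsymbol{\Sigma}^{(a)}_k$, standard Gaussian conditioning yields a marginal $\phi_{L_a}$ on $\mathbf{x}^{s_a}$ with mean $\boldsymbol{\gamma}^{(a)}_k + \sum_{h<g}\mathbf{B}_{ah}\mathbf{x}^{S_h}$ and covariance $\boldsymbol{\Sigma}^{(a)}_k$, and a conditional $\phi_{L_b}$ on $\mathbf{x}^{s_b}$ whose mean, after the $\mathbf{B}_{ba}\boldsymbol{\gamma}^{(a)}_k$ and $\mathbf{B}_{ba}\mathbf{B}_{ah}\mathbf{x}^{S_h}$ terms introduced by \eqref{eq:id4} and by the lower block of $\boldsymbol{\gamma}^{(g)}_t$ cancel against the contributions from the unconditional mean, reduces to $\boldsymbol{\gamma}^{(b)}_{k'} + \mathbf{B}_{ba}\mathbf{x}^{s_a} + \sum_{h<g}\mathbf{B}_{bh}\mathbf{x}^{S_h}$, with Schur-complement covariance $\boldsymbol{\Sigma}^{(b)}_{k'}$. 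The decisive feature is that this conditional covariance depends only on $k'$, not on $k$; combined with $\pi^{(g)}_t = \pi^{(a)}_k\pi^{(b)}_{k'}$, the double sum over $t$ separates into $\bigl(\sum_k \pi^{(a)}_k \phi_{L_a}\bigr)\bigl(\sum_{k'} \pi^{(b)}_{k'} \phi_{L_b}\bigr)$, which is exactly the factor that $M^*$ contributes at positions $g$ and $g+1$.

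The main technical obstacle is this algebraic simplification of the Schur-complement conditional mean and covariance, together with the verification that the apparent $k$-dependence of the conditional distribution of $\mathbf{x}^{s_b}$ vanishes. Equations \eqref{eq:id1}--\eqref{eq:id4} are precisely the structural conditions that produce these cancellations and allow the mixture sum to factor as a product of two lower-order mixtures, exhibiting $M$ and $M^*$ as distinct elements of $\mathcal{M}$ with identical densities, hence violating identifiability.
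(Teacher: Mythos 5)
Your proposal is correct and follows essentially the same route as the paper: the paper also argues that conditions (\ref{eq:id1})--(\ref{eq:id4}) force $f\bigl(\mathbf{x}^{S_g}\mid\mathbf{x}^{\cup_{h<g}S_h}\bigr)$ to factor into a mixture over $\mathbf{x}^{s_a}$ times a conditional mixture over $\mathbf{x}^{s_b}$, so that $M$ coincides in density with the refined model obtained by splitting $S_g$, hence is unidentifiable. The only difference is presentational: you derive the factorization explicitly via Schur-complement conditioning, whereas the paper reads it off by running its preceding merge computation (equations defining $\pi_t$, $\boldsymbol{\gamma}_t$, $\boldsymbol{\Sigma}_t$, $\mathbf{B}_{th}$) in reverse.
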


\begin{proof}
Equalities (\ref{eq:id1})-(\ref{eq:id4}) are sufficient conditions for the model $M$
to be unidentifiable. As remarked above, if these equalities hold for a given $g$,
$f\left(\mathbf{x}^{S_g}|\mathbf{x}^{\cup_{h=1}^{g-1} S_h}\right)$ can be written as follows:
\begin{eqnarray} \label{proof} \nonumber
f\left(\mathbf{x}^{S_g}|\mathbf{x}^{\cup_{h=1}^{g-1} S_h}\right) & = & f\left(\mathbf{x}^{S_a}|\mathbf{x}^{\cup_{h=1}^{g-1} S_h}\right) \times f\left(\mathbf{x}^{S_b}|\mathbf{x}^{\cup_{h=1}^{g-1} S_h}, \mathbf{x}^{S_a}\right)\\ \nonumber
& = & \sum_{k=1}^{K_{a}} \pi^{(a)}_k
\phi_{L_a}\left(\mathbf{x}^{s_a};
\boldsymbol{\gamma}^{(a)}_{k} + \sum_{h=1}^{g-1} \mathbf{B}_{a h}\mathbf{x}^{S_h}, \boldsymbol{\Sigma}^{(a)}_k
\right)\\
& \times & \sum_{k'=1}^{K_{b}} \pi^{(b)}_{k'}
\phi_{L_b}\left(\mathbf{x}^{s_b};
\boldsymbol{\gamma}^{(b)}_{k'}  + \sum_{h=1}^{g-1} \mathbf{B}_{b h}\mathbf{x}^{S_h}
+ \mathbf{B}_{b a}\mathbf{x}^{s_a}, \boldsymbol{\Sigma}^{(b)}_{k'}\right),
\end{eqnarray}
where $L_a+L_b=L_g$. Thus, model $M$ is not identified.
\end{proof}

\subsection{Parameter estimation}\label{sec:estimation}

For a given model $M=(G, S_1, \ldots, S_G, U, K_1, \ldots, K_G) \in \mathcal{M}$, whose parameters are  $\boldsymbol{\theta}_M=(\boldsymbol{\theta}_1, \ldots, \boldsymbol{\theta}_G, \boldsymbol{\theta}_U)
\in \mathcal{Q}_{M}$, the estimation can be performed using the ML method. Given a random sample $\mathbf{x}=(\mathbf{x}_1, \ldots,
\mathbf{x}_i, \ldots, \mathbf{x}_n)$ drawn from model $M$, the log-likelihood is
\begin{eqnarray} \nonumber
l(\boldsymbol{\theta}_M) & = &
  \sum_{i=1}^n \ln \left[ \sum_{k=1}^{K_1} \pi^{(1)}_k \phi_{L_1}\left( \mathbf{x}_i^{S_1};\boldsymbol{\mu}^{(1)}_k,\boldsymbol{\Sigma}^{(1)}_k\right) \right]\\ \nonumber
 & + & \sum_{g=2}^G \sum_{i=1}^n \ln \left[
 \sum_{k=1}^{K_g} \pi^{(g)}_k \phi_{L_g}\left(\mathbf{x}_i^{S_g};\boldsymbol{\gamma}^{(g)}_{k} + \sum_{h=1}^{g-1} \mathbf{B}_{gh}\mathbf{x}_i^{S_h},\boldsymbol{\Sigma}^{(g)}_k\right)
 \right]\\ \label{loglik modello nuovo}
 & + &
\sum_{i=1}^n \ln \left[ \phi_{L_U}\left(\mathbf{x}_i^{U}; \boldsymbol{\alpha}_{0} + \mathbf{A}_{1}\mathbf{x}_i^{S_1} + \mathbf{A}_{2}\mathbf{x}_i^{S_2},\boldsymbol{\Sigma}_U\right)\right].
\end{eqnarray}

Thus, since $l(\boldsymbol{\theta}_M)$ is composed of $G+1$ parts, each of which only depends on a sub-vector of $\boldsymbol{\theta}_M$, the maximization of $l(\boldsymbol{\theta}_M)$ can be obtained by a separate maximization of the $G+1$ parts. More specifically, the ML estimation of $\boldsymbol{\theta}_1$ can be computed using the EM algorithm for a Gaussian mixture model \citep[see, e.g.,][]{dempster1977,mclachlan2000}. As far as the ML estimation of $\boldsymbol{\theta}_g$ ($g=2, \ldots, G$) is concerned, it can be carried out by resorting to the EM algorithms developed by \citet{soffritti2011} and \citet{galimberti2015}. The EM algorithm described in the latter paper is also suitable to deal with models resulting from equation~(\ref{mn2c_rid_seemingly}). Finally, $\hat{\boldsymbol{\theta}}_U$ can be computed using the
ML solution for a multivariate linear regression model with Gaussian error terms \citep[see, e.g.,][]{srivastava2002}.

All models for the random vector $\mathbf{X}$ described in Section~\ref{sec:model} are based on the use of Gaussian mixture models whose components are assumed to have unconstrained covariance matrices. Models with a reduced number of variance-covariance parameters can be obtained by reparameterising the component-covariance matrices of any mixture in model (\ref{modello nuovoG}) according to their spectral decomposition and by imposing constraints on the resulting eigenvalues and/or eigenvectors over the mixture components \citep{banfield1993,celeux1995}. Namely, $\boldsymbol{\Sigma}^{(g)}_k=\lambda^{(g)}_{k}\mathbf{D}^{(g)}_{k}\mathbf{A}^{(g)}_{k}\mathbf{D}'^{(g)}_{k}$,
where $\lambda^{(g)}_{k}=|\boldsymbol{\Sigma}^{(g)}_{k}|^{1/L_g}$, $\mathbf{D}^{(g)}_{k}$ is the matrix of eigenvectors of $\boldsymbol{\Sigma}^{(g)}_{k}$ and $\mathbf{A}^{(g)}_{k}$ is the diagonal matrix containing the eigenvalues of $\boldsymbol{\Sigma}^{(g)}_{k}$ (normalized in such a way that $|\mathbf{A}^{(g)}_{k}|=1$). In this parameterisation, the volume, shape and orientation of the $k$th component in the mixture model (\ref{mnga}) are determined by the parameters $\lambda^{(g)}_{k}$, $\mathbf{A}^{(g)}_{k}$ and $\mathbf{D}^{(g)}_{k}$, respectively. Thus, by constraining one or more of these parameters to be the same for all components, for $g=1, \ldots, G$, parsimonious and interpretable models can be obtained. Thus, the model class $\mathcal{M}$ defined in Section~\ref{sec:identifiability} can be widened so as to include such parsimonious models. Namely, a model of this widened class can be denoted as $M^*=(G, S_1, \ldots, S_G,U,K_1,\ldots, K_G,P_1,\ldots, P_G,P_U)$, where $P_g$ ($g=1, \ldots G$) denotes the parameterisation of the component-covariance matrices of the mixture model (\ref{mnga}) and $P_U$ denotes the form (spherical, diagonal, unconstrained) of the covariance matrix $\boldsymbol{\Sigma}_U$ in model~(\ref{regression}). The parsimonious Gaussian mixture models (and their ML estimators) resulting from imposing (up to) fourteen different constraints on such parameters are illustrated in \cite{celeux1995}. They can be estimated using, for example, the package \verb"mclust". Details on the estimation of parsimonious Gaussian clusterwise linear regression models can be found in \cite{dang2015}. Models from this latter class are estimated and compared in Section~\ref{sec:datireali}.

\subsection{Model selection}\label{sec:model_selection}

The selection of an appropriate model in the model space for a given dataset can be performed through the same methods usually employed to select the number of components or the parameterizations of the component-covariance matrices in model-based cluster analysis \citep[see, e.g.,][]{mclachlan2000}. A widely employed information-based criterion is the $BIC$:
\begin{equation}\label{BIC}
BIC_M = 2 l(\hat{\boldsymbol{\theta}}_M) - npar_M \ln(n),
\end{equation}
where $l(\hat{\boldsymbol{\theta}}_M)$ and $npar_M$ denote the maximum value of the log-likelihood and number of estimated parameters in model $M$, respectively. Note that for models $M$ with a log-likelihood equal to the one defined in equation (\ref{loglik modello nuovo}) $BIC_M$ can be obtained by summing the $BIC$ values associated with the $G+1$ parts of $l(\boldsymbol{\theta}_M)$.

In a Bayesian framework, $BIC$ provides an asymptotic approximation of the log-posterior probability of a model \citep{kass1995}. The use of this criterion can be motivated on the basis
of both theoretical and practical results. \citet{keribin2000} proves that using the $BIC$ allows to consistently estimate
the number of mixture components. The criteria for performing variable selection in Gaussian model-based cluster analysis proposed by \citet{maugis2009a} and \citet{maugis2009b}, based on the $BIC$, are proved to be consistent under regularity conditions. A similar result is proved in \citet{galimberti2013} for selecting the partition
of the variables in a parsimonious approach to model-based cluster analysis.
From an applied point of view, good performances of the $BIC$ as a model selection criterion for
Gaussian mixture models are reported in several papers, such as \citet{biernacki1999} and \citet{fraley2002}. All results illustrated in Section~\ref{sec:results} are obtained by using this criterion.

\section{Experimental results}\label{sec:results}

\subsection{Further results from the introductory example}\label{sec:ripresa esempio intro}

All models illustrated in Section~\ref{sec:intro_example} can be seen as a special case of the models described in Section~\ref{sec:model}. They are listed in Table~\ref{t:intro example}, where $\mathbf{X}=(X_1, X_2, X_3)$ and $X_1$, $X_2$ and $X_3$ represent mother's height, father's height and student's height, respectively. Model $M_7$ is an improvement of model $M_6$ resulting from the elimination of $X_1$ from the regressors in the model for $\mathbf{X}^{S_2}$.

\begin{table*}
\caption{Maximised log-likelihood and $BIC$ value of seven models fitted to the heights dataset.}
\label{t:intro example}
\centering
\begin{tabular}{lcccccccccc}
\hline\noalign{\smallskip}
Models & $\mathbf{X}^{S_1}$ & $\mathbf{X}^{S_2}$ & $\mathbf{X}^{U}$ & $K_1$ & $K_2$ &  $\mathbf{X}_U^{S_1}$ & $\mathbf{X}_{S_2}^{S_1}$ &  $l(\hat{\boldsymbol{\theta}}_M)$ & $npar_M$ & $BIC_M$ \\
\noalign{\smallskip}\hline\noalign{\smallskip}
$M_1$ &  $\mathbf{X}$  & $\emptyset$ & $\emptyset$ &  1 & - &  - & - & $-$235.51    &     9        & $-$502.48\\
$M_2$ &  $X_1, X_2$  & $\emptyset$ & $X_3$ & 2 & - & $X_1,X_2$ & - &  $-$231.82     &     11        & $-$502.10\\
$M_3$ &  $X_1, X_2$  & $\emptyset$ & $X_3$ & 2 & - & $X_2$ & - & $-$232.68     &     10        & $-$500.33\\
$M_4$ &  $X_1, X_2$  & $X_3$ & $\emptyset$ & 1 & 1 & - & $X_2$ & $-$238.00     &     6        & $-$496.98\\
$M_5$ &  $X_1, X_2$  & $X_3$ & $\emptyset$ & 1 & 2 & - & $X_2$ & $-$235.15     &     8        & $-$498.27\\
$M_6$ &  $X_1, X_2$  & $X_3$ & $\emptyset$ & 2 & 2 & - & $X_1, X_2$ & $-$228.46     &     13        & $-$502.38\\
$M_7$ &  $X_1, X_2$  & $X_3$ & $\emptyset$ & 2 & 2 & - & $X_2$ & $-$229.83     &     12        & $-$501.63\\
\noalign{\smallskip}\hline
\end{tabular}
\end{table*}

According to the $BIC$, the best model is $M_4$. It describes the marginal distribution of the parents' heights with a Gaussian model in which the two heights are treated as independent and with the same variance; furthermore, the fathers' height is used as a regressor in a Gaussian linear regression model for describing the conditional distribution of the students' height. Thus, no evidence of cluster structure seems to be found in the dataset, neither in the distribution of the parents' heights nor in the conditional distribution of the students' height.
However, the difference between the $BIC$ of model $M_4$ and the one of other fitted models is low \citep{kass1995}. Thus, models supporting the presence of a cluster structure in the marginal distribution of the parents' heights ($M_3$), in the conditional distribution of the students' height ($M_5$) and in both distributions ($M_7$) may be employed for this dataset.

\subsection{Exploring the model space using genetic algorithms}\label{sec:algoritmo genetico}

In order to find the optimal model in the model space for a given dataset, all possible models have to be considered and compared. However, an approach based on such an exhaustive search is computationally expensive and time-consuming,
especially when the number of observed variables is high. Thus, non-exhaustive strategies are needed. A solution is represented by genetic algorithms. They constitute stochastic optimisation techniques that exploit principles and operators of the biological evolution of a species for solving complex problems with a vast number of possible solutions \citep{goldberg1989}. These algorithms are widely used in many fields of statistics \citep[see, e.g.,][]{chatterjee1996}. Applications in subset selection problems can be found, for example, in \citet{bozdogan2004}.

In general, a genetic algorithm starts from the examination of the chromosomes (ordered sequences of genes) that compose an initial population. Each of these chromosomes is randomly generated; it is assigned a value summarising its fitness. Then, an iterative evolution process is performed, based on three main genetic operators (selection, crossover, mutation), with the goal of generating novel populations composed of chromosomes characterised by improved fitness values. The selection operator consists in a weighted random sampling from the initial population with weights that are generally proportional to the chromosomes' fitness. The chromosomes selected in this way reproduce and their offspring will compose a novel generation. Such a generation is obtained after crossover and mutation. Namely, crossover is a random process of genome recombination that applies to pairs of chromosomes; mutation is a random alteration of a gene in a chromosome. The chromosomes of the resulting novel generation are assigned their fitness and the evolution process repeats. Usually, the algorithm stops when either a maximum number of populations has been generated or a satisfactory value of the fitness has been reached for the population.

A first genetic algorithm is developed for finding the model $\hat{M}$ such that
\begin{equation}\label{eq:optim_function}
\hat{M}=\argmax_{M \in \mathcal{M}} BIC_M,
\end{equation}
where $\mathcal{M}$ is the class of models illustrated in Section~\ref{sec:identifiability} having up to two cluster structures ($G=2$). In this algorithm each model $M$ is represented as a chromosome whose fitness is given by $BIC_M$. The evolution process is composed of two parts:
\begin{description}
  \item[\textit{a})] information extraction for the specification of model (\ref{mn1});
  \item[\textit{b})] information extraction for the specification of models (\ref{mn2a}) and (\ref{regression}).
  \end{description}
In part \textit{a}) the examined chromosomes have a binary gene for each variable in $\mathbf{X}$ (with 1/0 denoting a variable selected/not selected for $\mathbf{X}^{S_1}$, respectively); they also have two genes associated with the number of components to be used in models~(\ref{mn1}) and (\ref{mn2a}). The possible values of these two latter genes are the integer numbers between 1 and a maximum value chosen by the researcher ($K_{1max}$ for model~(\ref{mn1}), $K_{2max}$ for model~(\ref{mn2a})). Thus, chromosomes of length $L+2$ are examined in the first part of the algorithm. Let $M_\textit{a}$ be the best model detected at the end of part \textit{a}) and let $\hat{S}_1, \hat{K}_1$ be the solution for $S_1, K_1$ obtained from $M_\textit{a}$.

If $\hat{L}_1 < L$, where $\hat{L}_1$ denotes the length of $\mathbf{X}^{\hat{S}_1}$, in the second part of the genetic algorithm a solution for $S_2$, $K_2$ and $U$ is searched by keeping fixed the solution for $S_1, K_1$ obtained from the model $M_\textit{a}$. Specifically, the chromosomes are composed of $L-\hat{L}_1+1$ genes, with a binary gene for each variable in $\mathbf{X} \smallsetminus \mathbf{X}^{\hat{S}_1}$ (where 1 and 0 denote a variable selected and not selected for $\mathbf{X}^{S_2}$, respectively), and an additional gene (with positive integer values) associated with the number of components for model~(\ref{mn2a}). The model $M_{\textit{b}}$ obtained at the end of this second part provides $\hat{S}_2,\hat{K}_2,\hat{U}$, where $\hat{U}= \mathcal{I} \smallsetminus \hat{S}_1 \smallsetminus \hat{S}_2$. Thus, $\hat{U}$ will not be empty when $\hat{L}_1 + \hat{L}_2 < L$, with $\hat{L}_2$ denoting the length of $\mathbf{X}^{\hat{S}_2}$.
The final solution of the algorithm is $\hat{M}=(\hat{S}_1,\hat{S}_2,\hat{U},\hat{K}_1,\hat{K}_2)$. All models examined and estimated with this genetic algorithm have unconstrained covariance matrices. The results of an evaluation of its effectiveness, based on simulated datasets generated by models with unconstrained covariance matrices, are provided in Section~\ref{sec:montecarlo}.

A second genetic algorithm is developed for carrying out the model search in the wider class of parsimonious models illustrated in Section~\ref{sec:estimation} with $G=2$. The search is still decomposed in two parts that are similar to those described above; the only difference is in the structure of the chromosomes. Namely, in part \textit{a}) the examined chromosomes have two additional genes (with positive integer values up to 14) for distinguishing the parsimonious parameterisations to be used in models~(\ref{mn1}) and (\ref{mn2a}). In part \textit{b}) an additional gene is necessary to denote the parsimonious parameterisation in model~(\ref{mn2a}); another gene (with three possible values) is added for indicating the form of the covariance matrix $\boldsymbol{\Sigma}_U$ in model~(\ref{regression}).
Section~\ref{sec:datireali} shows the results obtained from the analysis of two real datasets based on this second algorithm.

These genetic algorithms have been implemented in \verb"R" by exploiting the package \verb"GA" \citep{scrucca2013}. Each execution requires the specification of the following tuning parameters: $K_{1max}$, $K_{2max}$, $N_1$ and $N_2$ (dimension of the examined populations in parts \textit{a}) and \textit{b}), respectively), $d_{1max}$ and $d_{2max}$ (maximum number of generations to be examined in the two parts of the algorithm). The specific values of these tuning parameters employed in the analyses are detailed in the following Sections. In both algorithms linear-rank selection and single point crossover operators are used. The probability of crossover between pairs of chromosomes is set equal to $0.8$ in all analyses. As far as the mutation is concerned, this genetic transformation is randomly carried out in a parent chromosome with a probability of $0.1$.

\subsection{A Monte Carlo study}\label{sec:montecarlo}

The performance of the first genetic algorithm is evaluated through a Monte Carlo experiment in which artificial
datasets are generated in the Euclidean space $\mathds{R}^8$ using model~(\ref{modello nuovo}), where $\mathbf{X}^{S_{1}}= (X_1, X_2, X_3)$, $K_1=2$, $\mathbf{X}^{S_{2}}= (X_4, X_5, X_6)$, $K_2=2$,
and $\mathbf{X}^{U}= (X_7, X_8)$.

Specifically, the parameters of the marginal p.d.f.~of $\mathbf{X}^{S_1}$ are: $\pi^{(1)}_1=0.5$,
\begin{equation}\nonumber
\boldsymbol{\mu}^{(1)}_{1}=\left(
                                  \begin{array}{c}
    0\\
    0\\
    0\\
                                    \end{array}
                                   \right), \
\boldsymbol{\mu}^{(1)}_{2}=\left(
                                  \begin{array}{c}
    5\\
    -5\\
    5\\
                                    \end{array}
                                \right), \
\boldsymbol{\Sigma}^{(1)}_1= \left(
                                  \begin{array}{ccc}
                                    1 & -0.6 & -0.3 \\
                                    -0.6 & 1 & -0.4 \\
                                    -0.3 & -0.4 & 1 \\
                                    \end{array}
                                \right), \
\end{equation}
\begin{equation}\nonumber
\boldsymbol{\Sigma}^{(1)}_2= \left(
                                  \begin{array}{ccc}
                                      1 & 0.6 & 0.3 \\
                                    0.6 & 1 & 0.4 \\
                                    0.3 & 0.4 & 1 \\
                                    \end{array}
                                \right).
\end{equation}

The parameters of the conditional p.d.f.~of $\mathbf{X}^{S_2}$ given $\mathbf{X}^{S_1}$ are: $\pi^{(2)}_1=0.5$,
\begin{equation}\nonumber
\boldsymbol{\gamma}^{(2)}_{1}=\left(
                                  \begin{array}{c}
-2\\
-1\\
3.5\\
                                    \end{array}
                                \right),  \
\boldsymbol{\gamma}^{(2)}_{2}=\left(
                                  \begin{array}{c}
4\\
5\\
-2.5\\
\end{array}
                                \right), \
\mathbf{B}_{21} = \left(
                    \begin{array}{ccc}
                                    1.5 & 2 & 1.5 \\
                                    1.5 & -2.5 & -2 \\
                                    1.5 & 2 & -2.5 \\
                                    \end{array}
                                \right), \
\end{equation}
\begin{equation}\nonumber
\boldsymbol{\Sigma}^{(2)}_1= \left(
                                  \begin{array}{ccc}
                                    1 & 0.5 & 0.6 \\
                                    0.5 & 1 & 0.4 \\
                                    0.6 & 0.4 & 1 \\
                                    \end{array}
                                \right),
\boldsymbol{\Sigma}^{(2)}_2= \left(
                                  \begin{array}{ccc}
                                    1 & -0.5 & -0.6 \\
                                    -0.5 & 1 & -0.4 \\
                                    -0.6 & -0.4 & 1 \\
                                    \end{array}
                                \right).
\end{equation}
Finally, the parameters of the conditional p.d.f.~of $\mathbf{X}^{U}$ given
$(\mathbf{X}^{S_1}, \mathbf{X}^{S_2})$ are: $\boldsymbol{\alpha}_0=\left(
                                                                 \begin{array}{c}
                                                                 2 \\
                                                                 2 \\
                                                             \end{array}
                                \right)$, $\mathbf{A}_1= \left(
                                  \begin{array}{ccc}
                                    2 & 2 & 2 \\
                                    -2 & -2 & -2 \\
                                    \end{array}
                                \right)$,
$\mathbf{A}_2=-\mathbf{A}_1$ and $\boldsymbol{\Sigma}_U= \left(
                                  \begin{array}{cc}
                                    2.25 & 0 \\
                                    0 & 1 \\
                                    \end{array}
                                \right)$.

The main goal of this experiment is to evaluate the effectiveness of the genetic algorithm (with the $BIC$ as a fitness measure) in selecting the model the datasets come from. To this end, one hundred samples of $n=200$ observations each are generated and analysed using the algorithm. Since the algorithm's performance may depend on how large is the exploration of the model space, the algorithm is executed by changing the values of the tuning parameters $N_1$ and $d_{1max}$ that control the information extraction for the specification of model (\ref{mn1}). Namely, the examined values are 120, 160 and 200 for $N_1$; 30, 40 and 50 for $d_{1max}$. The other tuning parameters are kept constant throughout the experiment; they are set as follows: $K_{1max}=K_{2max}=3$, $N_2=80$ and $d_{2max}=20$. Greater values of $N_2$ and $d_{2max}$ are not examined because some preliminary analyses have highlighted that increasing them has a little impact on the results. This is mainly due to the fact that in part \textit{b}) of the algorithm the exploration of the model space is carried out conditionally on the results obtained in part \textit{a}). Thus, nine executions of the algorithm are performed. The same analysis is carried out with the sample size $n=400$.

The effectiveness of the genetic algorithm is evaluated with respect to the ability to recover the correct variable partition. The percentage of datasets for which $\mathbf{X}^{S_1}$, $\mathbf{X}^{S_2}$ and $\mathbf{X}^{U}$ are successfully identified is generally high, especially with $n=400$ (see Tables~\ref{tab:sim1} and \ref{tab:sim2}, first row). The most common error is represented by a partition in which both uninformative variables are allocated to the variable sub-vector that defines the first cluster structure, and $X_2$ is wrongly inserted in the variable sub-vector that defines the second cluster structure. Whenever the exploration of the model space is widened, an improvement in the effectiveness of the algorithm is generally obtained. This improvement is almost exclusively associated with an increase in the tuning parameter $N_1$. Using a value greater than 30 for the maximum number of examined generations, in association with any examined value of $N_1$, is completely useless with $n=400$.

\begin{table}[t]
\caption{Distribution of 100 simulated datasets according to the partition of $\mathbf{X}$ obtained with the genetic algorithm in the executions with $n=200$.} \label{tab:sim1}
\centering
\begin{tabular}{lccccccccc} \hline
& \multicolumn{9}{c}{$N_1$}\\
 & \multicolumn{3}{c}{120} & \multicolumn{3}{c}{160} & \multicolumn{3}{c}{200}\\\cline{2-10}
 & \multicolumn{9}{c}{$d_{1max}$} \\
 & 30 & 40 & 50 & 30 & 40 & 50 & 30 & 40 & 50 \\ \hline
Correct classification of all variables       &  76 & 79 & 80 & 91 &  88 & 91  & 91  & 91 &  90\\
Correct recovery of $\mathbf{X}^{S_{1}}$ only & 4   & 3 &  3 & 3 &  4  & 3 & 3 & 3 &  4\\
$\hat{S}_{1}=(1, 3, 7, 8)$, $\hat{S}_{2}=(2, 4, 5, 6)$ & 14   & 14 & 13 & 5 &  8  & 5 & 6 & 6 &  5\\
Other wrong partitions & 6   & 4 & 4 & 1 &  0  & 1 & 0 & 0 &  1\\
\hline
\end{tabular}
\end{table}

\begin{table}[t]
\caption{Distribution of 100 simulated datasets according to the partition of $\mathbf{X}$ obtained with the genetic algorithm in the executions with $n=400$.} \label{tab:sim2}
\centering
\begin{tabular}{lccccccccc} \hline
& \multicolumn{9}{c}{$N_1$}\\
 & \multicolumn{3}{c}{120} & \multicolumn{3}{c}{160} & \multicolumn{3}{c}{200}\\\cline{2-10}
 & \multicolumn{9}{c}{$d_{1max}$} \\
 & 30 & 40 & 50 & 30 & 40 & 50 & 30 & 40 & 50 \\ \hline
Correct classification of all variables       &  87 & 87 & 87 & 92 & 92 & 92 & 95  & 95 & 95 \\
Correct recovery of $\mathbf{X}^{S_{1}}$ only & 1   & 1 &  1 & 2 & 2   & 2 & 2 & 2 & 2 \\
$\hat{S}_{1}=(1, 3, 7, 8)$, $\hat{S}_{2}=(2, 4, 5, 6)$ & 9   & 9 & 9 & 5 & 5  & 5& 3 & 3 & 3\\
Other wrong partitions & 3   & 3 & 3 & 1 & 1  & 1& 0 & 0 & 0 \\
\hline
\end{tabular}
\end{table}

The effectiveness is also evaluated with respect to the ability of the genetic algorithm to recover the two latent cluster structures. This task is carried out by computing the adjusted Rand index between the true cluster structures and the structures estimated by the algorithm. Although in some datasets the true variable partition is not correctly detected, the first cluster structure identified by the genetic algorithm perfectly coincides with the first true one in all samples (aRi=1). Also the agreement between the second true cluster structure and the second estimated one is very high: the mean (over 100 datasets) of the aRi is greater that 0.996 in all executions of the algorithm for both sample sizes. These results are due to the fact that there is a clear-cut separation between clusters in both $\mathbf{X}^{S_1}$ and $\mathbf{X}^{S_2}$; furthermore, in all datasets in which the genetic algorithm selects wrong variable partitions, two variables in both $\mathbf{X}^{\hat{S}_1}$ and $\mathbf{X}^{\hat{S}_2}$ are always correctly selected.

\subsection{Results from real datasets}\label{sec:datireali}

Examples are carried out using two benchmark real datasets: the crabs dataset and the AIS dataset. The crabs dataset is described in \citet{venables2002} and is available in the \verb"R" package \verb"MASS".
It reports $L=5$ morphological measurements (in mm) for $n=200$ crabs of the species Leptograpsus variegatus: frontal lobe size (FL), rear width (RW), carapace length (CL), carapace width (CW) and body depth (BD). Namely, the sample is composed of 50 crabs each of two colours (blue and orange) and both sexes. The AIS dataset is described in Cook and Weisberg (1994) and is available in the \verb"R" package \verb"sn" \citep{azzalini2014}. It contains information concerning $n=202$ athletes (102 males and 100 females) at the Australian Institute of Sport. In particular, the analysis described in this Section focuses on $L=9$ variables: red cell count (RCC), white cell count (WCC), hematocrit (Hc), hemoglobin (Hg), plasma ferritin concentration (Fe), body mass index (BMI), sum of skin folds (SSF), body fat percentage (Bfat) and lean body mass (LBM). Each dataset is analysed using the second genetic algorithm illustrated in Section~\ref{sec:algoritmo genetico}. For comparison purposes, analyses are carried out also using the \verb"R" packages \verb"mclust" and \verb"clustvarsel" and the softwares \verb"SelvarClust" and \verb"SelvarClustIndep".

\subsubsection{Crabs dataset}\label{sec:crabs}

The best Gaussian mixture model fitted to the joint p.d.f.~of the five morphological measurements using \verb"mclust" (with a maximum number of components equal to five) is a mixture of four Gaussian ellipsoidal components with the same volume and shape. This result is obtained with an option for the initialisation of the EM algorithm that transforms the variables using a singular value decomposition. The value of $BIC$ for such a mixture model is $-2842.3$. Table~\ref{tab:crabs1} reports the cross classification of the crabs based on the maximum estimated posterior probabilities and their gender and colour. The clustering obtained from this model reproduces quite well the four classes of crabs defined from gender and colour.

\begin{table}
\caption{Classification of the crabs according to their colour and gender (BF = blue female, BM = blue male, OF = orange female, OM = orange male) and the cluster membership estimated by the model selected using \texttt{mclust}.}  \label{tab:crabs1}
\centering
\begin{tabular}{lcccc}
  \hline
 & \multicolumn{4}{l}{Colour and gender} \\
Cluster & BF & BM & OF & OF \\ \hline
1 & 49 & 11 & 0  & 0\\
2 & 0  & 0  & 5  & 50\\
3 & 0  & 39 & 0  & 0 \\
4 & 1  & 0  & 45 & 0\\ \hline
aRi & \multicolumn{4}{c}{0.794} \\  \hline
\end{tabular}
\end{table}

According to the variable selection methods implemented in \verb"clustvarsel" only four morphological measurements are relevant for clustering the crabs. They are frontal lobe size, rear width, carapace width and body depth. The best Gaussian mixture model fitted to the p.d.f.~of these measurements is a mixture of four Gaussian ellipsoidal components with the same volume and shape. Using this model allows to obtain a partition of the crabs with an increased agreement with the partition based on gender and colour (see Table~\ref{tab:crabs2}). The same result is also obtained using both \verb"SelvarClust" and \verb"SelvarClustIndep".
The $BIC$ value of the resulting joint model for the five measurements, obtained as described in Section~\ref{sec:estimation}, is equal to $-2811.2$, thus leading to an improved model with respect to the one obtained without variable selection.

\begin{table}
\caption{Classification of the crabs according to their colour and gender and the cluster membership estimated by the model selected using \texttt{clustvarsel}.}  \label{tab:crabs2}
\centering
\begin{tabular}{lcccc}
  \hline
 & \multicolumn{4}{l}{Colour and gender}\\
Cluster   & BF & BM & OF & OF \\ \hline
1 & 50 & 10 & 0  & 0\\
2 & 0  & 0  & 5  & 50\\
3 & 0  & 40 & 0  & 0 \\
4 & 0  & 0  & 45 & 0\\ \hline
aRi & \multicolumn{4}{c}{0.815}\\   \hline
\end{tabular}
\end{table}

The splitting of the five measurements produced by two independent executions of the genetic algorithm (using the tuning parameters $N_1=N_2=400$, $d_{1max}=d_{2max}=40$, $K_{1max}=K_{2max}=5$ in the first execution and $N_1=N_2=500$, $d_{1max}=d_{2max}=50$, $K_{1max}=K_{2max}=5$ in the second) is: $\mathbf{X}^{\hat{S}_1}=$ (RW, CL), $\mathbf{X}^{\hat{S}_2}=$ (FL, CW, BD), $\mathbf{X}^{\hat{U}}=\emptyset$. A mixture of two Gaussian components is selected for both the joint marginal p.d.f.~of (RW, CL) and the conditional p.d.f.~of (FL, CW, BD)$|$(RW, CL).
Namely, the two components are ellipsoidal with the same volume and shape in the mixture model for (RW, CL), while they are spherical with the same volume in the model for (FL, CW, BD)$|$(RW, CL). The $BIC$ value of the resulting joint model is $-2812.7$. Thus, two cluster structures are detected. Tables~\ref{tab:crabs3} and~\ref{tab:crabs4} show some classifications of the crabs pertaining to the first and the second cluster structure, respectively. The clustering of crabs resulting from the analysis of rear width and carapace length reproduces quite well the classification of crabs based on their gender, while it is not associated with the classification based on colour. On the contrary, an almost perfect agreement arises from the comparison between this latter classification and the clustering obtained by modelling
the dependence of frontal lobe size, carapace width and body depth on rear width and carapace length using a clusterwise Gaussian linear regression model with two components. The second cluster structure detected in the dataset does not take into account differences in gender.

\begin{table}
\caption{Comparison between the first cluster structure detected by the genetic algorithm in the joint marginal distribution of RW and CL and the classifications of crabs based on their colour and/or gender.}  \label{tab:crabs3}
\centering
\begin{tabular}{lcccccccccc}
  \hline
& \multicolumn{4}{l}{Colour and gender} & & \multicolumn{2}{l}{Colour} & & \multicolumn{2}{l}{Gender}\\
Cluster         & BF & BM & OF & OF  & & B  & O  & & F   & M \\ \hline
      1 & 50 & 7 & 50  & 3   & & 57 & 53 & & 100 & 10\\
      2 & 0  & 43  & 0  & 47  & & 47 & 43 & & 0 & 90\\
  \hline
      aRi & \multicolumn{4}{c}{0.400} & & \multicolumn{2}{c}{$-$0.003} & & \multicolumn{2}{c}{0.810} \\
  \hline
\end{tabular}
\end{table}

\begin{table}
\caption{Comparison between the second cluster structure detected by the genetic algorithm in the conditional distribution of (FL, CW, BD)$|$(RW, CL) and the classifications of crabs based on their colour and/or gender.}  \label{tab:crabs4}
\centering
\begin{tabular}{lcccccccccc}
  \hline
 & \multicolumn{4}{l}{Colour and gender} & & \multicolumn{2}{l}{Colour} & & \multicolumn{2}{l}{Gender}\\
Cluster        & BF & BM & OF & OF  & & B  & O  & & F   & M \\ \hline
      1 & 50 & 50 & 1  & 0   & & 100 & 1 & & 51 & 50\\
      2 & 0  & 0  & 49  & 50  & & 0 & 99 & & 49 & 50\\
  \hline
    aRi & \multicolumn{4}{c}{0.486} & & \multicolumn{2}{c}{0.980} & & \multicolumn{2}{c}{$-$0.005} \\
  \hline
\end{tabular}
\end{table}

The model selected by the genetic algorithm assumes that FL, CW and BD linearly depend on both RW and CL. Since this assumption could be restrictive, models are also estimated in which a different set of regressors is allowed for each of the three regression equations in the multivariate linear regression model of frontal lobe size, carapace width and body depth. According to the $BIC$, the best solution obtained after examining these further models is a mixture of two seemingly unrelated linear regression models in which frontal lobe size and carapace width are both regressed on carapace length and rear width, while body depth only depends on carapace length. The clustering of the crabs resulting from such a model coincides with the classification of crabs based on their colour (aRi=1). The joint model for the five morphological measurements given by the product of this seemingly unrelated linear regression model and the Gaussian mixture model for CL and RW described above has a $BIC$ value ($-2808.3$) which is slightly higher than the best model obtained performing model-based variable selection. This model allows to detect the presence of two different cluster structures in the dataset, each of which is strongly or perfectly associated with one of the two known classifications that characterise this dataset. As a consequence of above, the two detected cluster structures are independent one another.

\subsubsection{AIS dataset}\label{sec:ais}

The best Gaussian mixture model resulting from the analysis performed through \verb"mclust" (with a maximum number of components equal to five and using the same initialisation of the EM algorithm employed in the analysis of the crabs dataset) is a mixture of three Gaussian ellipsoidal components with the same orientation. The value of $BIC$ for such a mixture model is $-9028.2$. The clustering obtained from this model reproduces quite well the two classes of athletes based on their gender (see Table~\ref{tab:ais1}, left part).

The three examined variable selection methods lead to different decisions about the variables that provide relevant information on the clustering of the athletes (see Table~\ref{tab:ais2}). Using \verb"clustvarsel" (with a maximum number of components for the p.d.f.~of the informative variables equal to five), only the biometrical variables are selected. The best Gaussian mixture model fitted to the p.d.f.~of these variables is a mixture of three Gaussian ellipsoidal components with the same shape. The $BIC$ value of the resulting joint model for the nine variables is $-9008.1$. Thus, according to the $BIC$, this joint model is better than the best model detected without variable selection. However, the partition of the athletes resulting from the best mixture model for the biometrical variables shows a slightly lower agreement with the partition based on gender (see Table~\ref{tab:ais1}).

\begin{table}
\caption{Classification of the athletes according to their gender and the cluster membership estimated by the models selected using \texttt{mclust}, \texttt{clustvarsel}, \texttt{SelvarClust} and \texttt{SelvarClustIndep}.}  \label{tab:ais1}
\centering
\begin{tabular}{lcccccccccccccccc}
  \hline
&  \multicolumn{3}{l}{\texttt{mclust}} & & \multicolumn{3}{l}{\texttt{clustvarsel}} & & \multicolumn{3}{l}{\texttt{SelvarClust}} & & \multicolumn{4}{l}{\texttt{SelvarClustIndep}}\\ \hline
  &  \multicolumn{3}{l}{Cluster}      & & \multicolumn{3}{l}{Cluster}      & & \multicolumn{3}{l}{Cluster}      & & \multicolumn{3}{l}{Cluster} &\\
 Gender & 1 & 2 & 3                                & & 1 & 2 & 3                                & & 1 & 2 & 3                        & & 1 & 2 & 3 &\\ \hline
F & 97 & 2 & 1 & & 39 & 61 & 0                             & & 2 & 1 & 97                       & & 2 & 97 & 1 &\\
M & 2 & 40 & 60 & & 13 & 1 & 88                             & & 25 & 75 & 2                      & & 23 & 2 & 77 &\\  \hline
aRi & \multicolumn{3}{c}{0.682}             & & \multicolumn{3}{c}{0.586}             & & \multicolumn{3}{c}{0.735} & & \multicolumn{3}{c}{0.745} & \\  \hline
\end{tabular}
\end{table}

\begin{table}
\caption{Variables selected by the packages \texttt{clustvarsel}, \texttt{SelvarClust} and \texttt{SelvarClustIndep} from the AIS dataset.}  \label{tab:ais2}
\centering
\begin{tabular}{ll}
\hline
Package & Selected variables \\ \hline
\texttt{clustvarsel} & BMI, SSF, Bfat, LBM \\
\texttt{SelvarClust} & BMI, SSF, Bfat, LBM, Fe, Hg \\
\texttt{SelvarClustIndep} & BMI, SSF, Bfat, LBM, Fe, Hc \\ \hline
\end{tabular}
\end{table}

Softwares \texttt{SelvarClust} and \texttt{SelvarClustIndep} also select two blood composition variables: one is plasma ferritin concentration and the other is hemoglobin or hematocrit.
Namely, the best joint model obtained after three independent executions of \texttt{SelvarClust} is given by the product of a Gaussian mixture model with three equally-oriented components for the joint marginal distribution of BMI, SSF, Bfat, LBM, Fe and Hg, and a Gaussian linear regression model for the conditional distribution of the remaining variables in which only Hg is used as a regressor and the covariance matrix is unconstrained. The $BIC$ value of the joint model obtained in this way is $-8935.3$. Using \texttt{SelvarClustIndep} the best model is composed of a mixture
of $K=3$ Gaussian components with the same orientation for the joint marginal distribution of BMI, SSF, Bfat, LBM, Fe and Hc, and a Gaussian linear regression model for the conditional distribution of the remaining variables in which the selected regressors are Hc, BMI and Bfat and the covariance matrix is diagonal. None of the uninformative
variables results to be independent of all the informative ones.
Overall, this joint model registers a $BIC$ of $-8934.5$. As far as the recovery of the classification based on gender is concerned, the partitions of the athletes resulting from the mixture models for the variables selected by \texttt{SelvarClust} and \texttt{SelvarClustIndep} reach a very similar performance, that is slightly better than the ones obtained using both \texttt{mclust} and \texttt{clustvarsel} (see Table~\ref{tab:ais1}).

\begin{table}
\caption{Cluster structures detected by the genetic algorithm and their association with the classification of the athletes based on gender.}  \label{tab:ais3}
\centering
\begin{tabular}{lcccccccc}
  \hline
  & \multicolumn{4}{l}{Structure 1}  & \multicolumn{4}{l}{Structure 2}\\
 & \multicolumn{3}{l}{Cluster} & & \multicolumn{3}{l}{Cluster}  & \\
Gender       & 1 & 2 & 3                   & & 1 & 2 & 3 &\\ \hline
    F & 98 & 1 & 1                   & & 50 & 49 & 1 &\\
    M & 1  & 74 & 27                 & & 54 & 34 & 14 &\\
\hline
aRi & \multicolumn{3}{c}{0.754}     & & \multicolumn{3}{c}{0.015}\\
  \hline
\end{tabular}
\end{table}

The results obtained from two independent executions of the genetic algorithm (using the tuning parameters $N_1=N_2=300$, $d_{1max}=d_{2max}=30$, $K_{1max}=K_{2max}=4$ in the first execution and $N_1=N_2=700$, $d_{1max}=d_{2max}=70$, $K_{1max}=K_{2max}=5$ in the second) highlight the existence of two cluster structures. Namely, the first structure is found in the sub-vector $\mathbf{X}^{\hat{S}_1}=$ (BMI, SSF, Bfat, LBM, Hg). The best model for the p.d.f.~of this sub-vector is a mixture of three Gaussian ellipsoidal components with the same orientation.
The recovery of males and females classes obtained using the segmentation of the athletes based on this model is slightly improved (see Table~\ref{tab:ais3}, left part). The second cluster structure is found in the conditional distribution of the sub-vector $\mathbf{X}^{\hat{S}_2}=$ (Hc, Fe) given $\mathbf{X}^{\hat{S}_1}$, resulting from a mixture of three Gaussian components with diagonal covariance matrices having the same volume. The partition of the athletes obtained from this second mixture model is not associated with the athletes' gender (see Table~\ref{tab:ais3}, right part). Since in model (\ref{modello nuovo}) the latent variables $Z_1$ and $Z_2$ are assumed to be independent, this latter result is not surprising. Thus, the second structure is reasonably associated with other (unobserved) factors independent of the gender. Finally, red and white cell counts compose $\mathbf{X}^{\hat{U}}$ and, thus, result to be uninformative variables. Their conditional distribution is modelled using a Gaussian linear regression model with a diagonal regression covariance matrix. The $BIC$ value of the joint model for the nine variables is $-8933.2$.

An improvement of the clusterwise linear regression model with three components selected for (Hc, Fe) is obtained after performing regressors selection. This task is carried out by allowing each dependent variable to have its own specific set of regressors. Furthermore, all fourteen parsimonious parameterisations are estimated for each examined model. According to the $BIC$, the best solution is obtained using a model in which haematocrit is regressed on hemoglobin, sum of skin folds and body fat percentage, while the selected predictors for plasma ferritin concentration are hemoglobin, body mass index, body fat percentage and lean body mass. The component-covariance matrices of this model are unconstrained. In a similar way, the best Gaussian linear regression model for (RCC, WCC) is the one that has haematocrit as a predictor for both dependent variables and sum of skin folds only for WCC. The covariance matrix of this model is diagonal. Since the joint model for the nine variables obtained in this way has a $BIC$ value of $-8856.9$, it seems to provide a good description of the relevant information contained in the AIS dataset.

\section{Conclusions}\label{sec:conclusions}

The proposed framework relies on a very general model that makes it possible to compare the results of different (supervised and unsupervised) analyses carried out on a given dataset. Namely, this model allows to perform variable selection in model-based clustering according to the methods proposed by \citet{raftery2006}, \citet{maugis2009a} and \citet{maugis2009b}. It also allows to carry out model selection in multivariate linear regression analysis and seemingly unrelated linear regression analysis by assuming either a Gaussian model or a Gaussian mixture model for the distribution of the errors \citep{soffritti2011, galimberti2015}. Furthermore, using the proposed model enables the detection of the presence of multiple cluster structures from possibly correlated variable sub-vectors.

It should be noted that the process of selecting a model in this framework may be quite complex and may prevent the methodology illustrated in this paper from being used with high-dimensional datasets. Resorting to genetic algorithms can partly mitigate this drawback. This kind of algorithms is able to globally explore the model space, thus avoiding the main problems that typically characterise stepwise or greedy search strategies. Clearly, the effectiveness of a genetic algorithm greatly depends on how large is the performed exploration of the model space. However, no general rule about how to choose proper values of the population size and the number of generations in a genetic algorithm is known. The Monte Carlo study summarised in Section~\ref{sec:montecarlo} suggests that an important role is played by the population size. A proper choice of the tuning parameters in a genetic algorithm as well as other aspects concerning model selection (e.g.: a comparison with other strategies for exploring the model space; how to deal with high-dimensional datasets) represent topics for future research. Nevertheless, the experimental results illustrated in Sections~\ref{sec:ripresa esempio intro} and \ref{sec:datireali} show that for some datasets the joint use of supervised and unsupervised learning methods allows to extract unknown relevant information that otherwise would be missed, thus supporting the usefulness of the proposed framework.

\bibliographystyle{elsarticle-harv}

\begin{thebibliography}{}

\bibitem[Andrews and McNicholas(2014)]{andrews2014}
Andews, J.L., McNicholas, P.D.: Variable selection for clustering and classification. J. Classif. {\bf 31}, 136--153 (2014)

\bibitem[Azzalini(2014)]{azzalini2014}
Azzalini, A.: The \verb"R" package \verb"sn": The skew-normal and skew-$t$
distributions (version 1.1-2). \texttt{URL http://azzalini.stat.unipd.it/SN} (2014)


\bibitem[Banfield and Raftery(1993)]{banfield1993}
Banfield, J.D., Raftery, A.E.: Model-based Gaussian and non-Gaussian clustering.
Biometrics, {\bf 49}, 803-–821 (1993)

\bibitem[Belitskaya-Levy(2006)]{belitskaya2006}
Belitskaya-Levy, I.: A generalized clustering problem, with application to DNA microarrays. Stat. Appl. Genet. Mol. Biol. {\bf 5}, Article 2 (2006)

\bibitem[Biernacki and Govaert(1999)]{biernacki1999}
Biernacki, C., Govaert, G.: Choosing models in model-based clustering and discriminant analysis.
J. Stat. Comput. Simul. {\bf 64}, 49--71 (1999)

\bibitem[Bozdogan(2004)]{bozdogan2004}
Bozdogan, H.: Intelligent statistical data mining with information complexity and genetic algorithms. In:
Bozdogan, H. (ed.), Statistical Data Mining and Knowledge Discovery. Chapman \& Hall/CRC, London, 15--56 (2004)

\bibitem[Brusco and Cradit(2001)]{brusco2001}
Brusco, M.J., Cradit, J.D.: A variable-selection heuristic for k-means clustering.
Psychometrika {\bf 66}, 249--270 (2001)

\bibitem[Celeux and Govaert(1995)]{celeux1995}
Celeux, G., Govaert, G.: Gaussian parsimonious clustering models.
Pattern Recognit. {\bf 28}, 781-–793 (1995)

\bibitem[Celeux \textit{et al.}(2011)]{celeux2011}
Celeux, G., Martin-Magniette, M-L., Maugis, C., Raftery, A.E.:
Letter to the editor. J. Am. Stat. Assoc. {\bf 106}, 383 (2011)

\bibitem[Celeux \textit{et al.}(2014)]{celeux2014}
Celeux, G., Martin-Magniette, M-L., Maugis-Rabusseau, C., Raftery, A.E.:
Comparing model selection and regularization
approaches to variable selection in model-based
clustering. J. Soc. Fr. Statistique {\bf 155}, 57--71 (2014)

\bibitem[Chatterjee \textit{et al.}(1996)]{chatterjee1996}
Chatterjee, S., Laudato, M., Lynch, L.A.: Genetic algorithms and their statistical applications: an introduction.
Comput. Stat. Data Anal. {\bf 22}, 633--651 (1996)

\bibitem[Cook and Weisberg(1994)]{cook1994}
Cook, R.D., Weisberg, S.: An Introduction to Regression Graphics.
Wiley, New York (1994)

\bibitem[Dang and Bailey(2015)]{dangbailey2015}
Dang, X.H., Bailey, J.: A framework to uncover multiple alternative clusterings.
Mach. Learn. {\bf 98}, 7--30 (2015)

\bibitem[Dang and McNicholas(2015)]{dang2015}
Dang, U.J., McNicholas, P.D.: Families of parsimonious finite mixtures of regression models. In:
Morlini, I., Minerva T., Vichi, M. (eds.), Statistical Models for Data Analysis. Springer, Berlin, 73-84 (2015)

\bibitem[Dempster \textit{et al.}(1977)]{dempster1977}
Dempster, A.P., Laird, N.M., Rubin, D.B.: Maximum likelihood for
incomplete data via the EM algorithm. J. R. Stat. Soc. Ser. B {\bf 39},
1–-22 (1977)

\bibitem[De Sarbo and Cron(1988)]{desarbo1988}
De Sarbo, W.S., Cron, W.L.:
A maximum likelihood methodology for clusterwise linear regression.
J. Classif. {\bf 5}, 249--282 (1988)

\bibitem[De Veaux(1989)]{deveaux1989}
De Veaux, R.D.: Mixtures of linear regressions. Comput. Stat. Data Anal. {\bf 8}, 227--245 (1989)

\bibitem[Dy and Brodley(2004)]{dy2004}
Dy, J.G., Brodley, C.E.: Feature selection for unsupervised learning.
J. Mach. Learn. Res. {\bf 5}, 845--889 (2004)

\bibitem[Fowlkes \textit{et al.}(1988)]{fowlkes1988}
Fowlkes, E.B., Gnanadesikan, R., Kettenring, J.R.: Variable selection in clustering.
J. Classif. {\bf 5}, 205--228 (1988)

\bibitem[Fraiman \textit{et al.}(2008)]{fraiman2008}
Fraiman, R., Justel, A., Svarc, M.: Selection of variables for cluster analysis and classification rules.
J. Am. Stat. Assoc. {\bf 103}, 1294--1303 (2008)

\bibitem[Fraley and Raftery(2002)]{fraley2002}
Fraley, C., Raftery, A.E.: Model-based clustering,
discriminant analysis and density estimation. J. Am. Stat. Assoc. {\bf 97}, 611--631 (2002)

\bibitem[Fraley \textit{et al.}(2012)]{fraley2012}
Fraley, C., Raftery, A.E., Murphy, T.B., Scrucca, L.: \verb"mclust" version 4 for \verb"R": normal mixture modeling for model-based clustering, classification, and density estimation. Technical Report No. 597, Department of Statistics, University of Washington (2012)

\bibitem[Friedman and Meulman(2004)]{friedman2004}
Friedman, J.H., Meulman, J.J.: Clustering objects on subsets of attributes (with discussion).
J. R. Stat. Soc. Ser. B {\bf 66}, 815--849 (2004)

\bibitem[Galimberti \textit{et al.}(2009)]{galimberti2009}
Galimberti, G., Montanari, A., Viroli, C.: Penalized factor mixture analysis for variable selection in clustered data.
Comput. Stat. Data Anal. {\bf 53}, 4301--4310 (2009)

\bibitem[Galimberti and Soffritti(2007)]{galimberti2007}
Galimberti, G., Soffritti, G.: Model-based methods to identify multiple cluster structures in a data set.
Comput. Stat. Data Anal. {\bf 52}, 520--536 (2007)

\bibitem[Galimberti and Soffritti(2013)]{galimberti2013}
Galimberti, G., Soffritti, G.: Using conditional independence for parsimonious model-based Gaussian clustering.
Stat. Comput. {\bf 23}, 625--638 (2013)

\bibitem[Galimberti \textit{et al.}(2015)]{galimberti2015}
Galimberti, G., Scardovi, E., Soffritti, G.: Using mixtures in seemingly unrelated linear regression models with non-normal errors.
Stat. Comput. doi=10.1007/s11222-015-9587-0, 1-14 (2015)

\bibitem[Gnanadesikan \textit{et al.}(1995)]{gnanadesikan1995}
Gnanadesikan, R., Kettenring, J.R., Tsao, S.L.: Weighting and selection of variables
for cluster analysis. J. Classif. {\bf 12}, 113--136 (1995)

\bibitem[Goldberg(1989)]{goldberg1989}
Goldberg, D.E.: Genetic Algorithms in Search, Optimization, and Machine Learning. Addison-Wesley, Reading (1989)

\bibitem[Gordon(1999)]{gordon1999}
Gordon, A.D.: Classification, 2nd edn. Chapman \& Hall, Boca Raton (1999)

\bibitem[Guo \textit{et al.}(2010)]{guo2010}
Guo, J., Levina, E., Michailidis, G., Zhu, J.: Pairwise variable selection for high-dimensional
model-based clustering. Biometrics {\bf 66}, 793--804 (2010)

\bibitem[Hastie \textit{et al.}(2009)]{hastie2009}
Hastie, T., Tibshirani, R., Friedman, J.: The Elements of Statistical Learning: Data Mining, Inference, and Prediction, 2nd edn. Springer, New York (2009)

\bibitem[Hubert and Arabie(1985)]{hubert1985}
Hubert, L., Arabie, P.: Comparing partitions. J. Classif. {\bf 2} 193--218 (1985)

\bibitem[Kass and Raftery(1995)]{kass1995}
Kass, R.E., Raftery, A.E.: Bayes factors.
J. Am. Stat. Assoc. {\bf 90}, 773--795 (1995)

\bibitem[Keribin(2000)]{keribin2000}
Keribin, C.: Consistent estimation of the order of mixture models.
Sankhy\={a} Ser. A {\bf 62}, 49--66 (2000)

\bibitem[Law \textit{et al.}(2004)]{law2004}
Law, M.H.C., Figueiredo, M.A.T., Jain, A.K. Simultaneous feature selection and
clustering using mixture models. IEEE Trans. Pattern Anal. Mach. Intell. {\bf 26}, 1154--1166 (2004)

\bibitem[Liu \textit{et al.}(2015)]{liu2015}
Liu, T., Zhang, N.L., Chen, P. Liu, A.H., Poon, L.K.M, Wang, Y.: Greedy learning of latent tree models for multidimensional clustering. Mach. Learn. {\bf 98}, 301--330 (2015)

\bibitem[Maugis \textit{et al.}(2009a)]{maugis2009a}
Maugis, C., Celeux, G., Martin-Magniette, M-L.: Variable selection for clustering
with Gaussian mixture models. Biometrics {\bf 65}, 701--709 (2009a)

\bibitem[Maugis \textit{et al.}(2009b)]{maugis2009b}
Maugis, C., Celeux, G., Martin-Magniette, M-L.: Variable selection in model-based clustering:
a general variable role modeling. Comput. Stat. Data Anal. {\bf 53}, 3872--3882 (2009b)

\bibitem[McLachlan and Peel(2000)]{mclachlan2000}
McLachlan, G.J., Peel, D.: Finite Mixture Models. John Wiley \& Sons, Chichester (2000)

\bibitem[Melnykov and Maitra(2010)]{melnykov2010}
Melnykov, V., Maitra, R.: Finite mixture models and model-based clustering.
Stat. Surv. {\bf 4}, 80--116 (2010)

\bibitem[Montanari and Lizzani(2001)]{montanari2001}
Montanari, A., Lizzani, L.: A projection pursuit approach to variable selection.
Comput. Stat. Data Anal. {\bf 35}, 463--473 (2001)

\bibitem[Pan and Shen(2007)]{pan2007}
Pan, W., Shen, X.: Penalized model-based clustering with application to variable selection.
J. Mach. Learn. Res. {\bf 8}, 1145--1164 (2007)

\bibitem[Poon \textit{et al.}(2013)]{poon2013}
Poon, L.K.M., Zhang, N.L., Liu, T., Liu, A.H.: Model-based clustering of high-dimensional data:
variable selection versus facet determination. Int. J. Approx. Reason. {\bf 54}, 196--215 (2013)


\bibitem[Quandt and Ramsey(1978)]{quandt1978}
Quandt, R.E., Ramsey, J.B.: Estimating mixtures of normal distributions and switching
regressions. J. Am. Stat. Assoc. {\bf 73}, 730-738 (1978)


\bibitem[Raftery and Dean(2006)]{raftery2006}
Raftery, A.E., Dean, N.: Variable selection for model-based cluster analysis.
J. Am. Stat. Assoc. {\bf 101}, 168--178 (2006)

\bibitem[R Core Team(2015)]{R2015}
R Core Team: \verb"R": a language and environment for statistical computing.
\verb"R" Foundation for Statistical Computing, Vienna, Austria.
\texttt{URL http://www.R-project.org} (2015)

\bibitem[Schwarz(1978)]{schwarz1978}
Schwarz, G.: Estimating the dimension of a model.
Ann. Stat. {\bf 6}, 461--464 (1978)

\bibitem[Scrucca(2013)]{scrucca2013}
Scrucca, L.: \verb"GA": a package for genetic algorithms in \verb"R".
J. Stat. Softw. {\bf 53}, 4 (2013)

\bibitem[Soffritti(2003)]{soffritti2003}
Soffritti, G.: Identifying multiple cluster structures in a data matrix. Commun. Stat. Simul.
{\bf 32}, 1151--1177 (2003)

\bibitem[Soffritti and Galimberti(2011)]{soffritti2011}
Soffritti, G., Galimberti, G.:
Multivariate linear regression with non-normal errors: a solution based on mixture models.
Stat. Comput. {\bf 21}, 523--536 (2011)

\bibitem[Srivastava(2002)]{srivastava2002}
Srivastava, M.S.: Methods of Multivariate Statistics. John Wiley \& Sons, New York (2002)

\bibitem[Steinley and Brusco(2008a)]{steinley2008a}
Steinley, D., Brusco, M.J.: a new variable weighting and selection procedure for k-means cluster analysis.
Multivar. Behav. Res. {\bf 43}, 77--108 (2008a)

\bibitem[Steinley and Brusco(2008b)]{steinley2008b}
Steinley, D., Brusco, M.J.: Selection of variables in cluster analysis: An empirical comparison of eight
procedures. Psychometrika {\bf 73}, 125--144 (2008b)

\bibitem[Tadesse \textit{et al.}(2005)]{tadesse2005}
Tadesse, M.G., Sha, N., Vannucci, M.: Bayesian variable selection in clustering high-dimensional data.
J. Am. Stat. Assoc. {\bf 100}, 602--617 (2005)

\bibitem[Venables and Ripley(2002)]{venables2002}
Venables, W.N., Ripley, B.D.: Modern Applied Statistics with S. Fourth edition. Springer, New York (2002)

\bibitem[Wang and Zhu(2008)]{wang2008}
Wang, S., Zhu, J.: Variable selection for model-based high-dimensional clustering and
its application to microarray data. Biometrics {\bf 64}, 440--448 (2008)

\bibitem[Witten and Tibshirani(2010)]{witten2010}
Witten, D.M., Tibshirani, R.: A framework for feature selection in clustering.
J. Am. Stat. Assoc. {\bf 105}, 713--726 (2010)

\bibitem[Xie \textit{et al.}(2008)]{xie2008}
Xie, B., Pan, W., Shen, X.: Variable selection in penalized model-based clustering via
regularization on grouped parameters. Biometrics {\bf 64}, 921--930 (2008)

\bibitem[Zeng and Cheung(2009)]{zeng2009}
Zeng, H., Cheung, Y.-M.: A new feature selection method for Gaussian mixture clustering.
Pattern Recognit. {\bf 42}, 243--250 (2009)

\bibitem[Zhou \textit{et al.}(2009)]{zhou2009}
Zhou, H., Pan, W., Shen, X.: Penalized model-based clustering with unconstrained covariance
matrices. Electron. J. Stat. {\bf 3}, 1473--1496 (2009)
\end{thebibliography}








\end{document}